\documentclass[acmsmall,screen,nonacm]{acmart} 

\usepackage{multirow}
\usepackage{microtype}
\usepackage{listings}
\usepackage{enumitem}
\setlist{itemsep=3pt}
\usepackage[noabbrev,nameinlink]{cleveref}
\usepackage{xfrac}
\usepackage{fancybox}
\usepackage{keyval}
\usepackage{zref-savepos,zref-user}
\usepackage{hyperref}

\DeclareMathOperator{\bO}{\mathsf{O}}
\newcommand*{\size}[1]{\lvert {#1} \rvert}
\newcommand*{\raml}{\textsf{RaML}}
\newcommand*{\atlas}{\ensuremath{\mathsf{ATLAS}}}
\newcommand*{\tachis}{\ensuremath{\mathsf{Tachis}}}
\newcommand*{\coq}{\ensuremath{\mathsf{Rocq}}}
\newcommand*{\isabelle}{\ensuremath{\mathsf{Isabelle/HOL}}}
\newcommand*{\easycrypt}{\ensuremath{\mathsf{EasyCrypt}}}

\newcommand{\denotOf}[1]{\llbracket{#1}\rrbracket}
\newcommand{\remRef}[1]{\lfloor{#1}\rfloor}
\newcommand{\defsymb}{::=}
\newcommand{\evalContextSymb}{\mathbb{C}}
\newcommand{\evalContext}[1]{\evalContextSymb[#1]}
\newcommand{\lcName}{\lambda^\mathit{PC}}
\newcommand{\smallStep}{\hookrightarrow}

\def\LH/{Liquid Haskell}

\NewDocumentCommand{\DEF}{m}{\detokenize{#1}\label{lst:\detokenize{#1}}}
\NewDocumentCommand\FN{ O{#2} m O{black} D<>{\BooleanTrue} }{\relax\ifmmode\hyperref[lst:#1]{\color{#3}\mathtt{#2}}\else\hyperref[lst:#1]{\color{#3}\IfBooleanT{#4}{\footnotesize}\texttt{#2}}\fi}

\def\listIn/{\FN[in]{\in}}

\newcommand{\secRef}[1]{\hyperref[#1]{\S~\ref*{#1}}}

\def\thmSumPermute/{\FN[thm_sum_permute]{thm\_sum\_permute}}
\def\thmSumSplit/{\FN[thm_sum_split]{thm\_sum\_split}}

\ifdefined\WithAppendix
    \def\thmSumFactor/{\FN[thm_sum_factor]{thm\_sum\_factor}}    
    \def\thmSumLinear/{\FN[thm_sum_linear]{thm\_sum\_linear}}
    \def\thmSumReverse/{\FN[thm_sum_reverse]{thm\_sum\_reverse}}
    \def\thmSumConstant/{\FN[thm_sum_constant]{thm\_sum\_constant}}
    \def\thmSumRewrite/{\FN[thm_sum_rewrite]{thm\_sum\_rewrite}}
\else
    \def\thmSumFactor/{{\footnotesize\texttt{thm\_sum\_factor}}}
    \def\thmSumLinear/{{\footnotesize\texttt{thm\_sum\_linear}}}
    \def\thmSumReverse/{{\footnotesize\texttt{thm\_sum\_reverse}}}
    \def\thmSumConstant/{{\footnotesize\texttt{thm\_sum\_constant}}}
    \def\thmSumRewrite/{{\footnotesize\texttt{thm\_sum\_rewrite}}}
\fi

\ifdefined\InAppendix
    \def\Outcome/{{\footnotesize\color{constructor}\texttt{Outcome}}}
    \def\DEmpty/{{\footnotesize\color{constructor}\texttt{DEmpty}}}
    \def\DSome/{{\footnotesize\color{constructor}\texttt{DSome}}}
    \def\ProperDist/{{\footnotesize\color{constructor}\texttt{ProperDist}}}
    \def\SubDist/{{\footnotesize\color{constructor}\texttt{SubDist}}}
    \def\pure/{{\footnotesize\texttt{pure}}}
    \def\dunit/{{\footnotesize\texttt{dunit}}}
    \def\expectCost/{{\footnotesize\texttt{expectCost}}}
    \def\expectVal/{{\footnotesize\texttt{expectVal}}}
    \def\append/{{\footnotesize\texttt{append}}}
    \def\support/{{\footnotesize\texttt{support}}}
    \def\probOf/{{\footnotesize\texttt{probOf}}}
    \def\fMap/{{\footnotesize\texttt{fMap}}}
    \def\pMap/{{\footnotesize\texttt{pMap}}}
    \def\tick/{{\footnotesize\texttt{tick}}}
    \def\probMass/{{\footnotesize\texttt{probMass}}}
    \def\combine/{{\footnotesize\texttt{combine}}}
    \def\bind/{{\footnotesize\texttt{bind}}}
    \def\coin/{{\footnotesize\texttt{coin}}}
    \def\uniform/{{\footnotesize\texttt{uniform}}}
    \def\finSum/{{\footnotesize\texttt{finSum}}}
    \def\noDuplicates/{{\footnotesize\texttt{noDuplicates}}}

    \def\permute/{{\footnotesize\texttt{permute}}} 
    \def\permuteBody/{{\footnotesize\texttt{permuteBody}}}

    \def\logTwo/{{\footnotesize\texttt{log2}}} 
    \def\logTwoGeZero/{{\footnotesize\texttt{thm\_log2\_ge\_0}}}
    \def\logTwoIneq/{{\footnotesize\texttt{thm\_log2\_inequality}}}

    \NewDocumentCommand{\HfinSum}{ m m }{\color{black}\finSumSymb{#1}{#2}}
    \NewDocumentCommand\HEval{m m}{\color{black}\Eval{#1}{#2}}
    \NewDocumentCommand\HEvalTT{m m}{\color{black}\Eval{\mathtt{#1}}{\mathtt{#2}}}
    
    \NewDocumentCommand\HEvalTTOnly{ m }{\color{black}\mathbb{E}_{\mathtt{#1}}}
    
    \NewDocumentCommand\HEcostOpen{}{\EcostOpen}

    \def\triRight/{\color{black}\triangleRight}
    \def\Range/{{\footnotesize\texttt{Range}}} 
    \NewDocumentCommand\Rnneg{}{\mathbb{R}^{\geqslant 0}}
\else
    \def\List/{\FN{List}[constructor]}
    \def\Nil/{\FN[List]{Nil}[constructor]}
    \def\Cons/{\FN[List]{Cons}[constructor]}
    \def\Heap/{\FN{Heap}[constructor]}
    \def\Empty/{\FN[Heap]{Empty}[constructor]}
    \def\Outcome/{\FN{Outcome}[constructor]}
    \def\DEmpty/{\FN[data-dist]{DEmpty}[constructor]}
    \def\DSome/{\FN[data-dist]{DSome}[constructor]}
    \def\pure/{\FN{pure}}
    \def\dunit/{\FN{dunit}}
    \def\expectCost/{\FN{expectCost}}
    \def\expectVal/{\FN{expectVal}}
    \def\append/{\FN{append}}
    \def\support/{\FN{support}}
    \def\probOf/{\FN{probOf}}
    \def\fMap/{\FN{fMap}}
    \def\pMap/{\FN{pMap}}
    \def\tick/{\FN{tick}}
    \def\probMass/{\FN{probMass}}
    \def\combine/{\FN{combine}}
    \def\bind/{\FN{bind}}
    \def\coin/{\FN{coin}}
    \def\ProperDist/{\FN{ProperDist}[constructor]}
    \def\SubDist/{\FN{SubDist}[constructor]}
    \def\uniform/{\FN{uniform}}
    \def\finSum/{\FN{finSum}}
    \def\noDuplicates/{{\footnotesize\texttt{noDuplicates}}}

    \def\permute/{\FN{permute}}
    \def\permuteBody/{\FN{permuteBody}}
    
    \def\countHires/{\FN{countHires}}
    \def\thmExpectValHireAssistant/{\FN[thm_expectVal_hireAssistant]{thm\_expectVal\_hireAssistant}}

    \def\Event/{\FN{Event}[constructor]}
    \def\evProb/{\FN{evProb}}
    \def\andF/{\FN{andF}}
    \def\condProb/{\FN{condProb}}
    \def\thmEvProbExt/{\FN[thm_evProb_ext]{thm\_evProb\_ext}}
    \def\thmJointProbZero/{\FN[thm_jointProb_zero]{thm\_jointProb\_zero}}

    \def\logTwo/{\FN[logTwo]{log\textsubscript{2}}}
    \def\logTwoBase/{\FN[log2_base]{log\textsubscript{2}\_base}}
    \def\logTwoProduct/{\FN[log2_product]{log\textsubscript{2}\_product}}
    \def\logTwoMonotone/{\FN[log2_monotone]{log\textsubscript{2}\_monotone}}
    \def\logTwoGeZero/{\FN[thm_log2_ge_0]{thm\_log\textsubscript{2}\_ge\_0}}
    \def\logTwoIneq/{\FN[thm_log2_inequality]{thm\_log\textsubscript{2}\_inequality}}

    \NewDocumentCommand{\HfinSum}{ m m }{\hyperref[lst:finSum]{\color{black}\finSumSymb{#1}{#2}}}
    \NewDocumentCommand\HEval{m m}{\hyperref[lst:expectVal]{\color{black}\mathbb{E}}_{#1}[#2]}
    \NewDocumentCommand\HEvalTT{m m}{\hyperref[lst:expectVal]{\color{black}\mathbb{E}}_{\mathtt{#1}}[\mathtt{#2}]}
    
    \NewDocumentCommand\HEvalTTOnly{ m }{\hyperref[lst:expectVal]{\color{black}\mathbb{E}}_{\mathtt{#1}}}
    
    \NewDocumentCommand\HEcostOpen{}{\hyperref[lst:expectCost]{\color{black}\EcostOpen}}

    \def\triRight/{\hyperref[lst:triangle]{\color{black}\triangleRight}}
    \def\Range/{\FN{Range}}
    \NewDocumentCommand\Rnneg{}{\hyperref[lst:rnneg]{\color{black}\mathbb{R}^{\geqslant 0}}}
    \NewDocumentCommand\Zpos{}{\hyperref[lst:zpos]{\color{black}\mathbb{Z}^+}}
\fi

\def\rquick/{\FN{rquick}}
\def\rquickBody/{\FN{rquickBody}}
\def\partition/{\FN{partition}}
\def\sorted/{\C!sorted!}
\def\elems/{\C!elems!}
\def\deleteAt/{\C!deleteAt!} 
\def\factorial/{{\footnotesize\texttt{factorial}}} 
\def\harmonic/{{\footnotesize\texttt{harmonic}}} 

\def\qselect/{\FN{qselect}}
\def\qselectBody/{\FN{qselectBody}}

\def\thmProbOfConsNeq/{\FN[thm_probOf_cons_neq]{thm\_probOf\_cons\_neq}} 
\def\thmProbOfConsEq/{\FN[thm_probOf_cons_eq]{thm\_probOf\_cons\_eq}}

\def\meld/{\FN{meld}}

\NewDocumentCommand\Ecost{m}{%
    \mathbb{E}^{\text{cost}}[#1]
}
\NewDocumentCommand\HEcost{m}{
    \hyperref[lst:expectCost]{\color{black}\Ecost{#1}}
}
\NewDocumentCommand\HEcostTT{m}{
    \hyperref[lst:expectCost]{\color{black}\Ecost{\mathtt{#1}}}
}
\NewDocumentCommand\EcostOnly{}{\mathbb{E}^{\text{cost}}}
\NewDocumentCommand\HEcostOnly{}{\hyperref[lst:expectCost]{\color{black}\EcostOnly}}

\NewDocumentCommand\EcostOpen{}{\mathbb{E}^{\text{cost}}[}

\newcommand{\finSumSymbOnly}{\hyperref[lst:finSum]{\color{black}\sum\hspace{-0.75em}\sum}}

\NewDocumentCommand\finSumSymbLimits{ m m }{%
    \sum\hspace{-1.75em}\sum_{#1}^{#2}
}
\NewDocumentCommand\finSumSymb{ m m }{%
    \sum\hspace{-0.75em}\sum\nolimits_{#1}^{#2}
}

\NewDocumentCommand{\HfinSumTT}{ m m }{%
    \HfinSum{\mathtt{#1}}{\mathtt{#2}}
}
\NewDocumentCommand{\HfinSumTTLimits}{ m m }{%
    \hyperref[lst:finSum]{\color{black}\finSumSymbLimits{\mathtt{#1}}{\mathtt{#2}}}
}

\def\collect/{\FN{collect}}
\def\collectAux/{\FN{collectAux}}
\def\geoFin/{\FN{geoFin}}

\newenvironment{tightcenter}{%
  \setlength\topsep{3pt}
  \setlength\parskip{3pt}
  \begin{center}
}{%
  \end{center}
}

\newcommand{\Rocq}{}
\def\Rocq/{\textsf{Rocq}}

\newsavebox\lstbox

\newcounter{hsExpr}[subsection]
\renewcommand\thehsExpr{\thesection.\arabic{hsExpr}}

\definecolor{cbViolet}{RGB}{231, 212, 232}
\definecolor{cbGreen}{RGB}{217, 240, 211}

\definecolor{lightGreen}{HTML}{f0f7f0}
\definecolor{darkerGreen}{HTML}{008000}

\newcommand{\greenBox}[1]{\fcolorbox{darkerGreen}{lightGreen}{#1}}

\newcommand{\verificationEffort}[1]{%
\noindent%
\cornersize{.2}%
\begingroup%
\setlength{\fboxsep}{5pt}%
\begin{Sbox}
\begin{minipage}{\dimexpr(\linewidth-15pt)}
\textbf{Verification Effort:} #1
\end{minipage}
\end{Sbox}
\Ovalbox{\TheSbox}%
\endgroup%
}

\newcommand{\labelExp}[1]{\hfill\refstepcounter{hsExpr}$\textcolor{black}{(\text{\normalfont\thehsExpr})}$\label{#1}}

\def\LOCquicksort/{466}
\def\LOCmeldableHeaps/{68}
\def\LOCpermutations/{77}
\def\LOCquickselect/{393}

\makeatletter
\define@key{resultRow}{name}{\gdef\resultRow@name{#1}}
\define@key{resultRow}{isabelle}{\gdef\resultRow@isabelle{#1}}
\define@key{resultRow}{tachis}{\gdef\resultRow@tachis{#1}}
\define@key{resultRow}{lh}{\gdef\resultRow@lh{#1}}
\define@key{resultRow}{rocq}{\gdef\resultRow@rocq{#1}}
\define@key{resultRow}{easycrypt}{\gdef\resultRow@easycrypt{#1}}

\newcommand{\resultRow}[1]{%
\gdef\resultRow@name{---}%
\gdef\resultRow@isabelle{---}%
\gdef\resultRow@tachis{---}%
\gdef\resultRow@lh{---}%
\gdef\resultRow@rocq{---}%
\gdef\resultRow@easycrypt{---}%
\setkeys{resultRow}{#1}%
\resultRow@name{} & \resultRow@easycrypt{} & \resultRow@isabelle{} & \resultRow@rocq{} & \resultRow@tachis{} & \resultRow@lh{} \\
}
\makeatother

\newcommand{\hsBasicStyle}{\color{black}\footnotesize\ttfamily}
\newcommand{\triangleRight}{\scalebox{1.2}{\rotatebox[origin=c]{-90}{$\bigtriangleup$}}}

\definecolor[named]{constructor}{HTML}{009304}
\definecolor[named]{commentGray}{HTML}{4f4f54}

\newcommand\hsCommentStyle{\rmfamily\itshape\color{commentGray}}

\lstdefinelanguage{haskell}{
    string = [b]{"},
    morecomment = [l]{--},
    commentstyle = \hsCommentStyle,
    stringstyle = \color{blue},
	keywordstyle = [1]\color{constructor},
	keywordstyle = [2]\bfseries,
	keywordstyle = [3]\bfseries\color{red},
	keywords = [1]{Empty,Heap,Dist,Nat,Double,Int,Outcome,List,Nil,Cons,ProperDist,SubDist,NoDupList,IncrList,String,Node,Tree,DEmpty,DSome,Event},
	keywords = [2]{data,type,Ord,Eq,reflect,measure,case,of,where,if,then,else,infix,let,in,import,lazy,QED},
	keywords = [3]{assume},
}

\newcommand\Nat{\mathbb{N}}
\newcommand\Int{\mathbb{Z}}
\newcommand\Bool{\mathbb{B}}
\newcommand\Real{\mathbb{R}}

\lstdefinestyle{haskell}{
    columns=fullflexible,
    language=haskell,
    aboveskip=0.75\medskipamount,
    belowskip=0.75\medskipamount,
    numberstyle=\tiny\color{black},
    basicstyle=\hsBasicStyle,
    showstringspaces = false,
    basewidth={.5em,0.4em},
    escapechar=~,
    mathescape,
	keepspaces,
    xleftmargin=2em,
    captionpos=b,
	literate=
	  {->}{$\to$}{2}
      {<-}{$\gets$}{2}
	  {|>}{\triRight/}{2}
	  {<=}{$\leqslant$}2
	  {>=}{$\geqslant$}2
	  {!=}{$\neq$}2
	  {=>}{$\Rightarrow$}2
      {<=>}{$\Leftrightarrow$}{2}
	  {\\}{$\lambda$}1
      {\#a}{$\alpha$}1
      {\#b}{$\beta$}1
      {\#c}{$\gamma$}1
      {\#N}{$\Nat$}1
      {\#Z}{$\Int$}1
      {\#B}{$\Bool$}1
      {\#R}{$\Real$}1
      {\#Rnneg}{$\Rnneg$}3
}

\NewDocumentCommand\C{ O{} }{\lstinline[style=haskell,basicstyle=\color{black}\footnotesize\ttfamily,#1]}

\begin{document}

\title{A Cost-Aware Probability Monad for Liquid Haskell}

\author{Matthias Hetzenberger}
\orcid{0000-0002-2052-8772}
\affiliation{%
  \institution{Vienna University of Technology}
  \department{Institute of Logic and Computation}
  \city{Vienna}
  \country{Austria}
}
\email{matthias.hetzenberger@tuwien.ac.at}

\author{Georg Moser}
\orcid{0000-0001-9240-6128}
\affiliation{%
  \institution{University of Innsbruck}
  \department{Department of Computer Science}
  \city{Innsbruck}
  \country{Austria}
}
\email{georg.moser@uibk.ac.at}

\author{Florian Zuleger}
\orcid{0000-0003-1468-8398}
\affiliation{%
  \institution{Vienna University of Technology}
  \department{Institute of Logic and Computation}
  \city{Vienna}
  \country{Austria}
}
\affiliation{%
  \institution{TU München}
  \department{TUM School of Computation, Information and Technology (CIT)}
  \city{Munich}
  \country{Germany}
}
\email{florian.zuleger@tuwien.ac.at}

\begin{abstract}
Probabilistic algorithms and data structures are widely used to obtain favourable expected performance guarantees.
While their mathematical analysis is often well understood, mechanising expected-cost analyses remains challenging, requiring reasoning about probability distributions, expectations, and recursive stochastic behaviour.
Existing formal approaches frequently require substantial manual proof effort, since expected costs are often encoded separately from probabilistic computations and must therefore be propagated explicitly throughout proofs.

In this paper, we present a cost-aware probability monad for \LH/ that supports reasoning about probabilistic programs together with their expected costs.
Our approach combines executable probabilistic programs with refinement-type-based verification and SMT-supported automation.
The monad intrinsically tracks probability mass, expected values, and expected costs through refinement types, enabling many quantitative properties of probabilistic computations to be inferred compositionally from program structure.

We evaluate our approach on several classical probabilistic algorithms and data structures, including meldable heaps, randomised quicksort and quickselect, randomised splay trees, random permutations, and the hiring problem.
The case studies demonstrate different points along the spectrum between automated and interactive verification.
\end{abstract}

\maketitle
\renewcommand{\shortauthors}{Matthias Hetzenberger, Georg Moser, Florian Zuleger}

\section{Introduction}

\begin{table}[t]
\caption{LoC Comparison}
\label{table:loc-comp}

\renewcommand{\arraystretch}{1.1}
\begin{tabular}{|l|r|r|r|r|r|}
\hline
Case Study             & \easycrypt{} & \isabelle{} & \coq{} & \tachis{} \cite{HaselwarterLMG024} & LH \\ \hline
\resultRow{name={Meldable Heaps}, tachis=1045, lh={\LOCmeldableHeaps/}}
\resultRow{name={Randomised Quicksort},isabelle={295 \cite{EberlHN20}}, tachis=1072, rocq={980 \cite{WeegenMcKinna}, 1179 \cite{Tassarotti018}}, lh={\LOCquicksort/}}
\resultRow{name={Randomised Quickselect}, easycrypt={370  \cite{AvanziniBGMV24}}, lh={\LOCquickselect/}}
\resultRow{name={Random Permutations}, isabelle={175 \cite{Fisher_Yates-AFP}}, lh={\LOCpermutations/}}
\hline
\end{tabular}

%
%

\end{table} 


\begin{table}[t]
\caption{LoC of \LH/ Formalisations. \\ Column \textbf{Code} states lines of executable Haskell code, \textbf{Annot.} states lines of \LH/ annotations,  \textbf{Proof} reports lines of Haskell code used for proving theorems, and column \textbf{All} are the overall lines of code.}
\label{table:loc}

\aboverulesep=0ex
\belowrulesep=0ex
\renewcommand{\arraystretch}{1.2}
\hspace{-0.5cm}%
\begin{tabular}{r|l|r|r|r|r|}
\cmidrule{2-6}
& \textbf{Subject} & \textbf{Code} & \textbf{Annot.}  & \textbf{Proof} & \textbf{All} \\
\cmidrule{2-6}
\multirow{3}{*}{\parbox{1.3cm}{\raggedleft Library\\Code}$\,\left\lbrace\begin{array}{l} \\ \\ \\ \end{array}\right.$\hspace{-0.45cm}}
& Cost-Aware Probability Monad (\secRef{sec:prob-monad}) & 234 & 283 & 112 & 629 \\
& Finite Sum Library (\secRef{sec:summations}) & 97 & 95 & 172 & 364 \\
& Logarithm Axiomatisation (\secRef{sec:logarithms}) & 12 & 10 & 16 & 38 \\
\multirow{7}{*}{\parbox{1.3cm}{\raggedleft Case\\Studies}$\,\left\lbrace\begin{array}{l} \\ \\ \\ \\ \\ \\ \\ \end{array}\right.$\hspace{-0.45cm}}
& Meldable Heaps (\secRef{sec:meldable-heaps}) & 30 & 22 & 0 & 52 \\
& Randomised Quicksort (\secRef{sec:rand-quick}) & 127 & 147 & 152 & 426 \\
& Randomised Splay Trees (\secRef{sec:splay-trees}) & 175 & 115 & 0 & 290 \\
& Random Permutations (\secRef{sec:random-perm}) & 25 & 24 & 39 & 88 \\
& Hiring Problem (\secRef{sec:hiring}) & 59 & 46 & 114 & 219 \\
& Bayes' Theorem (\secRef{sec:bayes}) & 13 & 12 & 10 & 35 \\
& Randomised Quickselect (\secRef{sec:qselect}) & 113 & 87 & 156 & 356 \\
\cmidrule{2-6}
\end{tabular}

\end{table}

The analysis of probabilistic programs has become increasingly prominent in programming languages and formal verification research (see e.g.~\cite{BKS:2020,NipkowB19,EberlHN20,LMZ:2022,VasilenkoVB22,AvanziniBGMV24}). Randomisation is widely used to obtain favourable expected performance guarantees and to simplify algorithmic design.
Classical examples include \emph{randomised quicksort}~\cite{Cormen:2009,cichon-quick}, \emph{randomised quickselect}~\cite{devroye1984exponential}, \emph{meldable heaps}~\cite{MartinezR98}, \emph{randomised splay trees}~\cite{ALBERS2002213}, \emph{random permutations}~\cite{durstenfeld1964algorithm}, and the \emph{hiring problem}~\cite{ajtai2001improved}.

While the mathematical analysis of such algorithms is often well understood, mechanising their expected-cost analyses remains challenging. In addition to functional correctness, one must reason about probability distributions, expectations, recursive stochastic behaviour, and accumulated computational cost. As a consequence, formal probabilistic analyses frequently require substantial proof engineering effort.

A central difficulty is that expected-cost reasoning is often not integrated directly into the representation of probabilistic computations themselves. In many existing approaches, probability distributions and expected costs are treated separately, requiring expectations and costs to be propagated explicitly throughout proofs. This complicates compositional reasoning and limits automation, since even relatively simple probabilistic analyses may require a considerable amount of auxiliary reasoning about how costs interact with monadic operations and recursive probabilistic computations.

In this paper, we address this problem by introducing a cost-aware probability monad for \LH/. Our key idea is to make expected costs intrinsic to probabilistic computations through refinement-typed monadic operators. The monad tracks probability mass, expected values, and expected costs directly at the type level. Consequently, many quantitative properties of probabilistic computations can be inferred compositionally from the structure of programs together with the refinement types of the monadic operators.

Our approach combines executable probabilistic programs with refinement-type-based verification and SMT-supported automation. Since the framework is embedded directly in \LH/, we focus on finite discrete probability distributions, which align naturally with refinement reflection, termination checking, and SMT-supported reasoning. Consequently, probabilistic algorithms remain ordinary executable Haskell programs rather than purely semantic encodings.
At the same time, refinement types allow us to express quantitative properties of probabilistic computations directly in types, while \LH/'s SMT-based verification infrastructure discharges many proof obligations automatically. More sophisticated mathematical arguments can furthermore be carried out using \LH/'s interactive theorem proving features. Thus, our framework supports both automated verification and interactive proofs within a uniform setting.

We evaluate our approach on several classical probabilistic algorithms and data structures, including meldable heaps, randomised quicksort and quickselect, randomised splay trees, random permutations, and the hiring problem.
These examples illustrate different points along the spectrum between highly automated and interactive verification.
While several expected-cost analyses can be verified with comparatively little user guidance, more sophisticated examples additionally demonstrate how the framework integrates smoothly with interactive theorem proving when deeper mathematical reasoning is required.
\Cref{table:loc-comp} compares the size of our formalisation with formalisations carried out using other theorem provers, showing that our approach is competitive.
\Cref{table:loc} summarises the size of our \LH/ developments, where we report the lines of executable Haskell code, the lines of \LH/ annotations, as well as the lines of mechanised proofs, implemented using the theorem proving capabilities of \LH/.
As indicated by \Cref{table:loc}, the developments evaluate the usability of the probability monad on the continuum between fully automated verification without proof obligations and interactive verification requiring the formalisation of proofs.
Overall, the case studies suggest that the proposed probability monad provides a lightweight and expressive foundation for mechanising expected-cost analyses of probabilistic programs in \LH/.

\paragraph{Our contributions}
We present a general-purpose library---encoded as a cost-aware probability monad---for implementing and verifying probabilistic algorithms. More precisely, we make the following contributions:

\begin{enumerate}

\item \emph{A cost-aware probability monad for \LH/.}
We introduce a refinement-typed probability monad for finite discrete probability distributions that intrinsically tracks probability mass, expected values, and expected costs, enabling compositional expected-cost reasoning directly at the type level.

\item \emph{Automated and interactive expected-cost verification.}
We demonstrate that the refinement types of the monadic operators enable a high degree of SMT-supported automation for probabilistic expected-cost analyses while integrating smoothly with \LH/'s interactive theorem proving features for mathematically more sophisticated arguments.

\item \emph{Soundness of the induced expected-cost analysis.}
We state and establish soundness of the expected-cost analysis induced by the introduced probability monad, cf.~\Cref{thm:soundness}.

\item \emph{Case studies.}
We evaluate the framework on several classical probabilistic algorithms and data structures, including meldable heaps, the hiring problem, randomised quicksort and quickselect, and randomised splay trees, illustrating different points along the spectrum between highly automated and interactive verification.
To this end, we discuss the verification effort of each case study.
\end{enumerate}

\paragraph{Outline}

In \Cref{sec:overview}, we motivate our \emph{cost-aware probability monad} for \LH/ and illustrate its use on the case studies of meldable heaps and randomised quicksort.
In \Cref{sec:prob-monad} the probability monad is properly introduced.
The case studies are given in \Cref{sec:case-studies} after which the soundness of our expected cost analysis is proved in~\Cref{sec:soundness}.
The remaining part of the paper showcases the use of the probability monad on the
remaining case studies. \Cref{sec:meldable-heaps} presents the expected cost analysis of meldable heaps.
Finally, after discussing related work in \Cref{sec:related-work}, we 
conclude in \Cref{Conclusion}.
Due to space restrictions, additional informations and some of the case studies have been delegated to
the Appendix.

\newsavebox\lhBegin
\newsavebox\lhEnd
\begin{lrbox}{\lhBegin}
\lstinline[style=haskell,basicstyle=\color{black}\footnotesize\ttfamily]!{-@!
\end{lrbox}

\begin{lrbox}{\lhEnd}
\lstinline[style=haskell,basicstyle=\color{black}\footnotesize\ttfamily]!@-}!
\end{lrbox}

\begin{figure}[t]
\centering
\begin{lstlisting}[style=haskell,numbers=left]
{-@ ($\triangleRight$) :: #a -> x:#b -> {r:#b | r = x} @-} ~\label{lst:triangle}~
_ $\triangleRight$ x = x

{-@
data Heap #a = Empty ~\label{lst:Empty}~
            | Heap ~\label{lst:Heap}~{ key :: #a
                   , left :: Heap {v:#a | key <= v}
                   , right :: Heap {v:#a | key <= v}
                   }

meld :: Ord #a => h1:~\Heap/~ #a -> h2:~\Heap/~ #a ->
            {result~\zsavepos{pos:meldA}~:ProperDist ({h:~\Heap/~ #a | bag h = Bag_union (bag h1) (bag h2)}) ~\label{lst:meld:bag}~
~\zsavepos{pos:meldB}\hspace{\inteval{\zposx{pos:meldA}-\zposx{pos:meldB}}sp}~| ~\expectCost/~ result <= ~\logTwo/~ (size h1) + ~\logTwo/~ (size h2)} ~\label{lst:meld:end-type}~
@-}
meld (~\Heap/~ k1 l1 r1) (~\Heap/~ k2 l2 r2)
    | k1 <= k2 =
~\hspace{1.2cm}\phantom{\triRight/}~ ~\logTwoIneq/~ (size l1) (size r1) ~\label{lst:meld:logIneq}~
~\hspace{1.2cm}\triRight/~ ~\coin/~ 0.5 
              (~\fMap/~ (~\tick/~ 1 (meld l1 (~\Heap/~ k2 l2 r2))) (\ h -> ~\Heap/~ k1 h r1))
              (~\fMap/~ (~\tick/~ 1 (meld r1 (~\Heap/~ k2 l2 r2))) (\ h -> ~\Heap/~ k1 l1 h))
\end{lstlisting}
\caption{Meldable Heaps (without base cases and the symmetric inductive case)}
\label{lst:meldableheaps}
\end{figure}

\section{Overview}
\label{sec:overview}

We begin with meldable heaps, illustrating how the introduced probability monad enables a high degree of automation for expected-cost verification.
We then discuss the central design of the cost-aware probability monad and explain how its refinement-typed operators support compositional quantitative reasoning.
Finally, we consider randomised quicksort, demonstrating how the same framework smoothly scales to mathematically more sophisticated analyses requiring interactive theorem proving.

\subsection{Meldable Heaps}

Meldable heaps~\cite{MartinezR98} are a probabilistic heap data structure supporting a \emph{meld} operation that combines two heaps into a single heap without maintaining explicit balancing invariants.
Instead, balancing is achieved probabilistically: recursive descent during melding is determined by a fair coin toss, yielding logarithmic expected cost.

\Cref{lst:meldableheaps} shows the core of our implementation together with its refinement type specification.
Our encoding starts off with a
standard type definition of heaps in a refinement type setting.
Thus, the heap constraint can be statically enforced via the annotations in line~\ref{lst:Heap}, requiring that in a heap of the form \C!Heap k l r! all values in the left and right child nodes cannot be greater than \C!k!.
The function \meld/ takes two heaps and returns a probability distribution over heaps represented by the type \ProperDist/.
The refinement type simultaneously captures both functional correctness and the expected-cost bound of the probabilistic computation.
Functional correctness is ensured at two levels: the refinement-typed heap definition already enforces the heap-order property structurally, while the refinement on the result distribution specifies that the resulting heap contains exactly the elements of the two input heaps.
In addition, the refinement type states that the expected cost of the computation is bounded by the sum of the logarithms of the heap sizes.

The implementation itself remains close to a standard executable Haskell implementation. Probabilistic choice is represented through the combinator \C!$\coin/$ 0.5!, corresponding to a fair coin toss.
Recursive calls are instrumented with the combinator \tick/, which accounts for the cost of a recursive step.
Consequently, expected costs are tracked directly inside the probabilistic computation itself rather than being handled in a separate semantic layer.

A central aspect of this example is that the induced quantitative verification conditions are almost entirely discharged automatically by \LH/.
The recursive calls to \meld/ provide the quantitative induction hypotheses for the subcomputations, while the refinement types of the monadic combinators encode how expected costs propagate through probabilistic choice and sequential composition.
As a consequence, the expected-cost recurrence of the algorithm is reflected directly in the structure of the probabilistic program.

The only required user guidance consists of explicitly supplying suitable logarithmic inequalities, such as \logTwoIneq/.
This is necessary because logarithms are treated axiomatically in our development rather than being defined directly in the SMT-supported refinement logic.
The operator \C!$\triRight/$! is used to inject such auxiliary facts into the verification of the subsequent expression.%
\footnote{\LH/ provides the {\hsBasicStyle ?} combinator, where {\hsBasicStyle (f e) ? lemma} allows to use the post-condition of {\hsBasicStyle lemma} to verify the expression {\hsBasicStyle f e}. However, if subexpression {\hsBasicStyle e} also needs the post-condition of {\hsBasicStyle lemma}, one would need to write {\hsBasicStyle (f (e ? lemma)) ? lemma}. On the other hand, in the expression {\hsBasicStyle lemma $\triangleRight$ (f e)} the post-condition of {\hsBasicStyle lemma} can be used to successfully typecheck subexpression {\hsBasicStyle e} and only one invocation of {\hsBasicStyle lemma} is necessary.}
Operationally, the operator simply evaluates to its second argument. However, its refinement type ensures that the logical facts established by the first argument become available during verification of the second argument. In the example above, the logarithmic inequality is therefore made available when verifying the expected-cost bound of the probabilistic computation.

This example already illustrates the main philosophy underlying our approach: by integrating expected costs directly into probabilistic computations through refinement-typed monadic operators, many expected-cost analyses become compositional and largely syntax-directed.
We now briefly discuss the underlying probability monad enabling this style of compositional reasoning.

\subsection{The Cost-Aware Probability Monad}

\begin{figure}[t]
\centering
\begin{lstlisting}[style=haskell]
{-@ type $\mathbb{R}^{\geqslant 0}$ = {v:#R | v >= 0} @-} ~\label{lst:rnneg}~
{-@ type Prob = {v:#R | 0 < v && v <= 1} @-}
{-@ data Outcome #a = Outcome { cost :: $\Rnneg$, value :: #a, prob :: Prob } @-} ~\label{lst:Outcome}~

data Dist #a = DEmpty | DSome (Outcome #a) (Dist #a) ~\label{lst:data-dist}~

{-@ type SubDist #a = {d:Dist #a | ~\probMass/~ d <= 1} @-} ~\label{lst:SubDist}~
{-@ type ProperDist #a = {d:Dist #a | ~\probMass/~ d = 1} @-} ~\label{lst:ProperDist}~

{-@ measure ~\DEF{probMass}~ :: ~\SubDist/~ #a -> Prob @-}
probMass ~\DEmpty/~ = 0
probMass (~\DSome/~ (~\Outcome/~ _ _ p) xs) = p + probMass xs

{-@ measure ~\DEF{expectCost}~ :: ~\SubDist/~ #a -> $\Rnneg$ @-}
expectCost ~\DEmpty/~ = 0
expectCost (~\DSome/~ (~\Outcome/~ c _ p) xs) = c * p + expectCost xs

{-@ tick :: c:$\Rnneg$ ->  d:~\SubDist/~ #a -> {r:~\SubDist/~ #a | $\HEcostTT{r}$ = c * (~\probMass/~ d) + $\HEcostTT{d}$} @-}

{-@ coin :: {p:#R | 0 <= p && p <= 1} -> success:~\ProperDist/~ #a -> failure:~\ProperDist/~ #a ->
      {r:~\ProperDist/~ #a | $\HEcostTT{r}$ = p * $\HEcostTT{success}$ + (1-p) * $\HEcostTT{failure}$} @-}
\end{lstlisting}
\caption{Essentials of the Cost-Aware Probability Monad \\ For ease of notation we write \C!$\HEcostTT{e}$! instead of \C!expectCost e! inside refinements.}
\label{fig:essentials}
\end{figure}

Following the standard monadic view of probabilistic programming, computations are represented as finite probability distributions over outcomes.
The core definitions of the monad are summarised in \Cref{fig:essentials}.
An element \C!Outcome c v p! represents a computation returning the value \C!v! with probability \C!p! while incurring cost \C!c!.
Costs are therefore tracked on individual probabilistic outcomes, allowing sequential probabilistic computations to accumulate path-dependent costs.

Using refinement types, we distinguish between subdistributions and proper distributions. A subdistribution may have total probability mass smaller than one, whereas a proper distribution has total mass exactly one. Subdistributions arise naturally as intermediate objects during probabilistic computations, for example when scaling distributions through probabilistic choice. The refinement types
\SubDist/ and \ProperDist/ statically enforce the corresponding probability-mass invariants.

The function \FN{expectCost} computes expected costs by summing the costs of all outcomes weighted by their probabilities.
Since \FN{expectCost} is implemented as a terminating Haskell function, Liquid Haskell's refinement reflection mechanism can lift it into the refinement logic, allowing expected costs to appear directly inside refinement specifications and proofs.\footnote{In \LH/, this is achieved by annotationg a function with \texttt{\textbf{measure}}. Such functions are called \emph{measures} and they can be used to effectively specify properties of data structures since the values of measures are automatically tracked by \LH/ (see \cite{conf/icfp/VazouSJVJ14, tp-for-all} for more details).}

The crucial aspect of the monad is that the refinement types of its combinators already encode quantitative reasoning principles.
For example, the refinement type of \tick/ specifies that incrementing all outcomes by a fixed cost \C!c! increases the expected cost of a proper distribution by exactly \C!c!; more generally, for subdistributions, the increment is weighted by the probability mass of the distribution.

Similarly, the refinement type of \coin/ captures that the expected cost of a probabilistic choice is given by the weighted average of the expected costs of the two branches.

Consequently, quantitative reasoning follows the structure of the probabilistic program itself.
In the meldable-heaps example, the recursive calls provide the induction hypotheses for the subcomputations, while the refinement types of \tick/ and \coin/ encode how the expected costs compose.
\LH/ can therefore derive the induced expected-cost recurrence compositionally from the refinement-typed monadic structure of probabilistic programs rather than from a separate semantic cost model.

The use of finite discrete distributions plays an important role in enabling this style of verification.
Since expectations ultimately reduce to finite summations over explicit outcomes, the resulting verification conditions interact naturally with SMT-supported refinement typing while preserving executability of the probabilistic programs.

In the next subsection, we illustrate how the same framework also supports mathematically more sophisticated analyses through the mechanisation of randomised quicksort.

\subsection{Randomised Quicksort}
\label{sec:overview:quicksort}

\begin{figure}[t]
\begin{lstlisting}[style=haskell,numbers=left]
data ~\DEF{List}~ #a = Nil | Cons #a (List #a)

{-@ rquick :: Ord #a => {l:~\List/~ #a | ~\noDuplicates/~ l} -> ~\label{fig:quick-sort-simplified:rquick}~
    ~\ProperDist/~ {r:~\List/~ #a | sorted r && elems l = elems r} @-} ~\label{fig:quick-sort-simplified:rquick-sig}~
rquick ~\Nil/~ = ~\dunit/~ ~\Nil/~
rquick l = ~\uniform/~ 0 (length l - 1) (rquickBody l)

{-@ rquickBody :: Ord #a => {l:~\List/~ #a | ~\noDuplicates/~ l} -> {k:#N | k < length l} ->
        ~\ProperDist/~ {r:~\List/~ #a | sorted r && elems l = elems r} @-} ~\label{fig:quick-sort-simplified:rquickBody-sig}~
rquickBody l k = let e = ref l k in
                 let (smaller, bigger) = ~\partition/~ e (deleteAt l k) in ~\label{fig:quick-sort-simplified:rquickBody-def}~
                 ~\tick/~ (length l - 1) (~\combine/~ (rquick smaller) (rquick bigger) (merge e))
\end{lstlisting}
\caption{Randomised Quicksort (Simplified).}
\label{fig:quick-sort-simplified}
\end{figure}

We now consider randomised quicksort, illustrating how the proposed framework also supports mathematically more sophisticated expected-cost analyses requiring interactive theorem proving.
\Cref{fig:quick-sort-simplified} shows the core of our implementation.\footnote{For ease of presentation the depicted code has been simplified and shortened. In particular, termination measures have been elided.
The complete code is depicted in \Cref{fig:quick-sort} in \Cref{sec:rand-quick}.}

As in the \isabelle{} formalisation~\cite{EberlHN20}, we assume that the input list contains no duplicate elements, simplifying the probabilistic analysis through unique ranks of pivot elements.
Our formalisation establishes that the expected number of comparisons performed by randomised quicksort on a list with $n$ distinct elements is equal to $2(n+1)H_n - 4n$, where $H_n$ denotes the $n$-th harmonic number, confirming the results of the \isabelle{} proof.
The function \rquick/ samples a pivot index uniformly at random, partitions the remaining list into smaller and larger elements, recursively sorts the two partitions, and finally combines the results.

As in the meldable-heaps example, the implementation remains close to a standard executable Haskell program.
Following~\cite{EberlHN20}, we account for the \C!length l - 1! comparisons performed during partitioning through a single outer application of \tick/, rather than directly instrumenting the partitioning procedure itself.
While the latter would also be possible within our framework, the chosen formulation simplifies the subsequent formalisation and facilitates comparison with the \isabelle{} development.

The refinement type of \rquick/ expresses functional correctness by specifying that the resulting lists are sorted and contain exactly the elements of the input list, and is automatically verified by \LH/.
On the other hand, inferring a closed-form solution for the expected number of comparisons requires manual reasoning.
This is hardly surprising as the analysis requires substantial arguments about finite summations, permutations, and harmonic numbers.

The formalisation establishing the closed form solution for the expected number of comparisons performed by \rquick/ is divided into two steps.
%

In the first step, we establish that the expected cost of \rquick/ satisfies the classical \emph{quicksort recurrence} \C!qs_rec!, which is realised as the Haskell function
\begin{lstlisting}[style=haskell]
qs_rec 0 = 0
qs_rec n = n - 1 + 1/n * $\HfinSumTT{0}{n-1}$ (\k -> qs_rec k + qs_rec (n-1-k))
\end{lstlisting}
where we make use of the finite-sum operator \C!$\finSumSymbOnly$! that we introduce as part of our finite-sum library.

We remark on the level of automation offered by our cost-aware probability monad.
The local probabilistic expected-cost bookkeeping is handled compositionally through the monadic structure of the implementation itself:
The combinator \uniform/ contributes the averaging over all pivot choices, \tick/ contributes the additive partitioning cost \C!length l - 1!, and \combine/ propagates the recursive expected costs of the two recursive calls.
Consequently, the framework automatically derives the expected-cost equation
\begin{lstlisting}[style=haskell]
$\HEcostOpen$rquick l$]$ = n-1 + $\frac{\mathtt{1}}{\mathtt{n}}$ $\HfinSumTT{0}{n-1}$ (k -> $\HEcostOpen$sm l k$]$ + $\HEcostOpen$bg l k$]$),
\end{lstlisting}
where \C!(sm l k, bg l k) = partition (ref l k) (deleteAt l k)! denotes the result of partitioning \C!l! with regard to index \C!k!, \C!n = length l! denotes the length of the input list, and the summation ranges over the concrete pivot indices \C!k! of the list.

However, the obtained expression still depends explicitly on the concrete pivot positions. To derive the \C!qs_rec! recurrence, this dependence must be eliminated and replaced by a dependence only on the sizes of the recursive subproblems.
Following~\cite{EberlHN20}, we therefore introduce ranks of pivot elements and establish an index-rank bijection showing that the recursive costs only depend on the rank of the chosen pivot element rather than its concrete position in the list.
Using our finite-sum library, we subsequently mechanise the required symmetry arguments and summation rewritings to obtain the pivot-independent recurrence \C!qs_rec! stated above.

The result of this step is the theorem with the following refinement type:
\begin{lstlisting}[style=haskell]
{-@ thm_rquick_cost_eq_recurrence :: Ord #a => {l:List #a | ~\noDuplicates/~ l} ->
        { $\HEcostOpen$rquick l$]$ = qs_rec (length l) } @-}
\end{lstlisting}

The second step consists of the derivation of a closed form solution of \C!qs_rec!, as witnessed by the theorem \C!thm_qs_rec_closed_form! below, where \C!$\harmonic/$ n! computes the \C!n!-th harmonic number:
%
\begin{lstlisting}[style=haskell]
{-@ thm_qs_rec_closed_form :: n:#N -> { qs_rec n = 2*(n+1) * ~\harmonic/~ n - 4*n } @-}
\end{lstlisting}
The final theorem establishing the closed form for the number of expected comparisons of randomised quicksort then just invokes the aforementioned steps as shown below.
\begin{lstlisting}[style=haskell]
{-@ thm_rquick_cost :: Ord a => {l:List a | ~\noDuplicates/~ l} ->
    { $\HEcostOpen$rquick l$]$ = 2*(length l + 1) * ~\harmonic/~ (length l) - 4*(length l) } @-}
thm_rquick_cost l = ~\zsavepos{pos:rquickCostA}\phantom{\triRight/}~ thm_rquick_cost_eq_recurrence l -- Step 1
~\zsavepos{pos:rquickCostB}\hspace{\inteval{\zposx{pos:rquickCostA}-\zposx{pos:rquickCostB}}sp}\triRight/~ thm_qs_rec_closed_form (length l) -- Step 2
\end{lstlisting}

\section{Cost-Aware Probability Monad}
\label{sec:prob-monad}

Probabilistic algorithms, i.e., functions whose output is a probability distribution, are usually implemented using \emph{probability monads} in Haskell (cf. \cite{Ramsey02, Erwig06, Scibior15, VasilenkoVB22}).
This approach enables the succinct representation of randomised programs by structuring computations using a monad.
We implement our cost-aware probability monad for finite discrete probability distributions and use \LH/ to provide support for auto-active expected cost and expected value analysis of probabilistic algorithms.
In order to guarantee soundness of the facts established by refinement types, we require \LH/ to prove termination of all defined functions.

In the following, we give an overview over the core functionality of the cost-aware probability monad.
We end this section by introducing a library enabling the reasoning about finite sums within \LH/ and illustrate how we effectively deal with logarithms.

\paragraph{Functions of the Cost-Aware Probability Monad}

The core datatypes of the developed cost-aware probability monad were already
presented in~\Cref{fig:essentials}. In the following we introduce the functions---and
sketch their implementation in the monad---%
used as building blocks for the implementation and verification of probabilistic algorithms.
These functions can be used in refinements using the \emph{refinement reflection} feature of \LH/ \cite{ref-refl}.
Refinement reflection denotes the process of lifting a \emph{terminating} Haskell function to the refinement logic whenever it is annotated with a \C!reflect! annotation, such that it can be used in refinement types.\footnote{Refinement reflection is more general than the measure functionality of \LH/ but values of reflected functions are not automatically tracked.}
This enables the use of \LH/ as a theorem prover as reflected functions can be reasoned about in refinement types.
A theorem is then a Haskell function that takes premises as input and returns the unit type \C!()! that is refined with a logical predicate representing the theorem's statement.
For instance, the following theorem shows that two \tick/ statements can be merged into a single one:
\begin{lstlisting}[style=haskell]
{-@ thm_tick_merge :: c1:$\Rnneg$ -> c2:$\Rnneg$ -> d:~\SubDist/~ #a ->
        { ~\tick/~ c1 (~\tick/~ c2 d) = ~\tick/~ (c1+c2) d } @-}
thm_tick_merge _ _ ~\DEmpty/~ = ()
thm_tick_merge c1 c2 (~\DSome/~ _ xs) = thm_tick_merge c1 c2 xs
\end{lstlisting}
Note that the base case for empty distributions is verified automatically and we can just provide the unit value \C!()!, whereas for non-empty distributions we can call the theorem recursively in order for \LH/ to verify the refinement type.
Moreover, termination is automatically established from the structural recursion on the shape of the distribution.
Hence, inductive proofs are encoded as terminating recursive functions in \LH/.
For this reason it is crucial to ensure that we reason about \emph{finite} probability distributions in order to guarantee the soundness of properties encoded in refinement types.
We follow the convention to prefix the names of \LH/ theorems with \C!thm! in order to distinguish them from regular Haskell code.
Since all the functions in the remainder of this section are reflected, we will omit the \C!reflect! annotations.

\begin{figure}[t]
\begin{lstlisting}[style=haskell]
{-@ ~\DEF{expectVal}~ :: (#a -> #R) -> ~\SubDist/~ #a -> #R @-}
expectVal e ~\DEmpty/~  = 0
expectVal e (~\DSome/~ (~\Outcome/~ c x p) xs) = (e x) * p + expectVal e xs

{-@ ~\DEF{probOf}~ :: Eq #a => #a -> d:~\SubDist/~ #a -> {p:$\Rnneg$ | p <= ~\probMass/~ d} @-}
probOf v ~\DEmpty/~ = 0
probOf v (~\DSome/~ (~\Outcome/~ c x p) xs) = (if x == v then p else 0) + (probOf v xs)
\end{lstlisting}

\caption{Expected Values and Probabilities}
\label{fig:expect-val}
\end{figure}

Similar to expected costs, we define the \emph{expected value} of a distribution, as well as the probability of a value as defined in~\Cref{fig:expect-val}.
The first argument of \expectVal/ is a function mapping a value of type \C!$\alpha$! to a real number, a so-called \emph{expectation}.
We write \C!$\HEvalTT{e}{d}$! for \C!expectVal e d!.
The probability of a value in a given distribution is formalised by the function \probOf/.


In many cases we are able to precisely state using refinement types how the expected cost, expected value and probability mass of the resulting distribution behaves in relation to the input parameters.
These annotations are crucial in providing a high degree of automation, when the core functions of the monad are used.
As an expectation is needed to compute the expected value of a distribution, we use so-called \emph{ghost parameters}, cf.~\cite{Fillitre2016}.
Ghost parameters---also called auxilliary variables in the literature, cf.~\cite{Kleymann99,AvanziniBGMV24}---%
are commonly used in program verification. They are never evaluated by the function and their only purpose is the facilitation of stronger properties in the refinement types.
In contexts where expected values are not relevant, we omit the ghost parameter for the sake of readability.

As all the functions are implemented by simple recursion on the sequence of outcomes comprising the input distributions, we only state the more interesting refinement types which are given in~\Cref{fig:probMonad-defs}.
Note that several of the functions take as a ghost parameter an expectation \C!e! as first input in order to specify the behaviour of the expected value.
Kindly note that the \C!tick! function in~\Cref{fig:essentials} is actually the \C!tick! function from~\Cref{fig:probMonad-defs}, where the expectation argument is elided for readability.

\begin{figure}
\begin{lstlisting}[style=haskell,xleftmargin=0pt]
~\DEF{dunit}~ :: e:(#a -> #R) -> x:#a -> {d:~\ProperDist/~ #a | $\HEcostTT{d}$ = 0 && $\HEvalTT{e}{d}$ = e x && ~\probOf/~ x d = 1}

tick :: e:(#a -> #R) -> c:#Rnneg -> d:~\SubDist/~ #a -> ~\label{lst:tick}~
    {r:~\SubDist/~ #a | ~\probMass/~ r = ~\probMass/~ d && $\HEcostTT{r}$ = c * (~\probMass/~ d) + $\HEcostTT{d}$ && $\HEvalTT{e}{d}$ = $\HEvalTT{e}{r}$}

pMap :: e:(#a -> #R) -> {p:#R | 0 < p && p <= 1} -> d:~\SubDist/~ #a -> ~\label{lst:pMap}~
    {r:~\SubDist/~ #a | ~\probMass/~ r = p * (~\probMass/~ d) && $\HEcostTT{r}$ = p * $\HEcostTT{d}$ && $\HEvalTT{e}{r}$ = p * $\HEvalTT{e}{d}$}

fMap :: e:(#b -> #R) -> f:(#a -> #b) -> d:~\SubDist/~ #a -> ~\label{lst:fMap}~
    {r:~\SubDist/~ #b | ~\probMass/~ r = ~\probMass/~ d && $\HEcostTT{r}$ = $\HEcostTT{d}$ && $\HEvalTT{e}{r}$ = $\HEval{\mathtt{e} \circ \mathtt{f}}{\mathtt{d}}$}

append :: e:(#a -> #R) -> d1:~\SubDist/~ #a -> {d2:~\SubDist/~ #a | (~\probMass/~ d1) + (~\probMass/~ d2) <= 1} -> ~\label{lst:append}~
    {r:~\SubDist/~ #a | ~\zsavepos{pos:appendA}\probMass/~ r = (~\probMass/~ d1) + (~\probMass/~ d2) &&
~\zsavepos{pos:appendB}\hspace{\inteval{\zposx{pos:appendA}-\zposx{pos:appendB}}sp}~$\HEcostTT{r}$ = $\HEcostTT{d1}$ + $\HEcostTT{d2}$ && $\HEvalTT{e}{r}$ = $\HEvalTT{e}{d1}$ + $\HEvalTT{e}{d2}$}

bind :: e:(#a -> #R) -> d:~\SubDist/~ #a -> (x:#a -> {d:~\ProperDist/~ #b | $\HEcostTT{d}$ = e x}) -> ~\label{lst:bind}~
    {r:~\SubDist/~ #b | ~\probMass/~ r = ~\probMass/~ d && $\HEcostTT{r}$ = $\HEcostTT{d}$ + $\HEvalTT{e}{d}$}

combine :: f:(#a -> #b -> #c) -> d1:~\SubDist/~ #a -> d2:~\SubDist/~ #b -> ~\label{lst:combine}~
    {r:~\SubDist/~ #c | ~\zsavepos{pos:combineA}\probMass/~ r = (~\probMass/~ d1) * (~\probMass/~ d2) &&
~\zsavepos{pos:combineB}\hspace{\inteval{\zposx{pos:combineA}-\zposx{pos:combineB}}sp}~$\HEcostTT{r}$ = (~\probMass/~ d2) * $\HEcostTT{d1}$ + (~\probMass/~ d1) * $\HEcostTT{d2}$}
\end{lstlisting}

\caption{Functions of the Probability Monad}
\label{fig:probMonad-defs}
\end{figure}

The function \dunit/ returns a cost-free Dirac distribution whose only outcome contains the given value.
Costs are incurred using \tick/, which increments the cost of each outcome of the given distribution by a non-negative real value.
Its refinement type precisely captures how probability mass, expected cost and expected value of the resulting distribution behaves in relation to the inputs.
Similarly, the function \pMap/ scales the probability of each outcome by the given value $p \in (0,1]$.
As a consequence, the probability mass, expected cost and expected value of \C!pMap p d! are scaled by \C!p!.
While \tick/ and \pMap/ modify costs and probabilities, respectively, \fMap/ is used to transform the values inside a distribution by a given function.
The expected cost and probability mass of \C!fMap g f d! stay the same, whereas the expected value of the resulting distribution can be characterised by the composition of the provided expectation \C!g! with the function \C!f!.
Simple combination of distributions is realised by just appending the sequence of outcomes of one distribution after another sequence of outcomes, as implemented by \append/.
Probability mass, expected cost and expected value of \C!append e d1 d2! are given by summing up the respective values of \C!d1! and \C!d2!.

Sequential composition is realised through the monadic \bind/ function, which also takes care of threading incurred costs through the computation.
The first argument of \bind/ is a ghost parameter expectation that is used to specify the expected cost of the resulting distribution.
That is, in the expression \C!$\bind/$ e d f!, the refinement type annotation requires that the expected cost of \C!f x! for each input value \C!x! is equal to \C!e x!.
The post-condition then states that the expected cost of \C!bind e d f! is equal to the expected cost of \C!d! in addition to the \emph{expected value} of \C!d! with respect to expectation \C!e!.

Ultimately, the function \C!combine f d1 d2! can be thought of a cartesian product of distributions \C!d1! and \C!d2!.
That is, for every outcome \C!o1! of \C!d1! and \C!o2! of \C!d2! the function \combine/ adds the following outcome to the resulting distribution:
\begin{lstlisting}[style=haskell]
Outcome{cost=(cost o1)+(cost o2), value=f (value o1) (value o2), prob=(prob o1)*(prob o2)}
\end{lstlisting}
The refinement type for the distribution computed by \combine/ holds since costs of outcomes are added and probabilities are multiplied.


We model probabilistic choices using the \C!coin! function illustrated next, which again takes an expectation as input to track the expected value of the resulting distribution.
%
\begin{lstlisting}[style=haskell]
{-@ ~\DEF{coin}~ :: e:(#a -> #R) -> {p:#R | 0 <= p && p <= 1} ->
        success:~\ProperDist/~ #a -> failure:~\ProperDist/~ #a ->
        {r:~\ProperDist/~ #a | ~\zsavepos{pos:coinA}~$\HEcostTT{r}$ = p * $\HEcostTT{success}$ + (1-p) * $\HEcostTT{failure}$ &&
~\zsavepos{pos:coinB}\hspace{\inteval{\zposx{pos:coinA}-\zposx{pos:coinB}}sp}~$\HEvalTT{e}{r}$ = p * $\HEvalTT{e}{success}$ + (1-p) * $\HEvalTT{e}{failure}$} @-}
coin e 0 success failure = failure
coin e 1 success failure = success
coin e p success failure = ~\append/~ e (~\pMap/~ e p success) (~\pMap/~ e (1-p) failure)
\end{lstlisting}
The intuitive meaning of \C!coin e p success failure! is that with probability \C!p! the expression \C!success! is chosen and \C!failure! with probability \C!1-p!.

To support reasoning about uniformly random integers, we define the \uniform/ function, which builds upon \coin/.
By taking a continuation \C!f :: #Z -> ProperDist #a! as input, we can specify the expected cost of \C!uniform lo hi f! by the sum of the expected costs of \C!f i! for \C!lo <= i <= hi! divided by \C!hi-lo+1!, i.e., the weighted average of the resulting distributions.
A corresponding property also holds true for the expected value with respect to a given expectation.
With the help of the operator \C!$\finSumSymbOnly$! of the finite-sum library we can represent these facts by stating an expressive refinement type annotation for \C!uniform!, where \C!$\HfinSumTT{lo}{hi}\,$f! evaluates to \C!f lo + f (lo+1) + $\ldots$ + f hi!.%

\noindent\begin{minipage}{\linewidth}
\begin{lstlisting}[style=haskell]
{-@ type ~\DEF{Range}~ LO HI = {num:#Z | LO <= num && num <= HI} @-}

{-@ ~\DEF{uniform}~ :: e:(#a -> #R) -> lo:#Z -> {hi:#Z | lo <= hi} -> f:(~\Range/~ lo hi -> ~\ProperDist/~ #a) ->
        {r:~\ProperDist/~ #a | ~\zsavepos{pos:uniformA}~$\HEcost{\mathtt{r}}$ = (1 / (hi-lo+1)) * $\HfinSumTT{lo}{hi}$ ($\HEcostOnly$ $\circ$ f) &&
~\zsavepos{pos:uniformB}\hspace{\inteval{\zposx{pos:uniformA}-\zposx{pos:uniformB}}sp}~$\HEvalTT{e}{r}$ = (1 / (hi-lo+1)) * $\HfinSumTT{lo}{hi}$ ($\HEvalTTOnly{e}$ $\circ$ f)} / [hi-lo] @-}
uniform e lo hi f |  lo == hi = f lo
                  | otherwise = ~\coin/~ e (1 / (hi-lo+1)) (f lo) (uniform e (lo+1) hi f)
\end{lstlisting}
\end{minipage}

As in previous definitions, the expectation passed to \C!uniform! is a ghost parameter as it is only present in order to establish the property specifying the behaviour of the expected value.
In contexts where we are only interested in expected costs, we elide this parameter for readability.
Note that it is automatically established by \LH/ that \C!uniform! always results in a proper distribution and also that the expected cost and expected value is the average of the sum of distributions computed by the continuation using the uniform choices.
It is crucial to restrict the possible inputs for the continuation to \C!Range lo hi!, as \uniform/ is often used to generate uniformly random list indices, as in our randomised quicksort case study, for instance.
The refinement types must then keep track of the correct bounds in order to use the indices for list operations.

\paragraph{Monad Laws}
There is a convention in Haskell that implementation of typeclass should adhere to certain algebraic laws.
In the case of monads, our implementation has to satisfy the following rules, which we proved within \LH/:
\begin{itemize}[itemsep=2pt,topsep=2pt]
    \item Left identity: \C!$\bind/$ ($\dunit/$ x) f = f x!.
    \item Right identity: \C!$\bind/$ d $\dunit/$ = d!.
    \item Associativity: \C!$\bind/$ ($\bind/$ m g) h = $\bind/$ m (\x -> $\bind/$ (g x) h)!
\end{itemize}
In addition to these laws, the probability monad exhibits several more laws that we can exploit for the analysis of programs.
For instance, consider the program \C!$\bind/$ ($\coin/$ e p d1 d2) g!, for which expected costs need to be analysed.
Using a law of the probability monad, it can be rewritten into the equivalent form \C!$\coin/$ e p ($\bind/$ d1 g) ($\bind/$ d2 g)!.

Additionally, it holds that \C!fMap e d f! is equal to \C!bind d (\x -> dunit (f x))!.
However, by using \C!fMap! we get a stronger refinement type for the resulting distribution as the expected costs and expected values are tracked.
Similarly, \C!combine! can also be represented by \C!bind! and \C!dunit!, but we would again lose the properties stated in the refinement type.
This is the reason that we do not employ \C!bind! in the case studies and rather use the specialised functions facilitating a higher degree of automation.

\subsection{Support for Summations and Logarithms}
\label{sec:sums-logs}

In addition to the presented cost-aware probability monad we developed self-contained modules that enable reasoning about finite summations as well as logarithms within \LH/.
While the definitions for finite summations can be reflected to the logic level we make use of an axiomatic treatment of logarithmic functions.

\begin{figure}
\begin{lstlisting}[style=haskell]
{-@ reflect finSum :: lo:#Z -> hi:#Z -> f:(Range lo hi -> #R) -> ~\label{lst:finSum}~
        {r:#R | (lo <= hi) => (r = f lo + finSum (lo + 1) hi f && ~\label{lst:finSum:ref1}~
                               r = finSum lo (hi-1) f + f hi)} ~\label{lst:finSum:ref2}~
        / [if lo <= hi then hi-lo else 0] @-}
finSum lo hi f | lo == hi = f lo
               |  lo < hi = finSum lo (hi-1) f |> f lo + finSum (lo+1) hi f
               | otherwise = 0
\end{lstlisting}
\caption{Implementation of Finite Summations through \C!finSum!}
\label{fig:finSum}
\end{figure}

\subsubsection{Summations}
\label{sec:summations}

By defining a function \C!finSum! as depicted in \Cref{fig:finSum} and reflecting it to the refinement level, we enable the use of this function in refinement types.
For readability we abbreviate \C!finSum lo hi f! with \C!$\HfinSumTT{lo}{hi}\,$f!.
Note that in the implementation of \C!finSum! we used the triangle operator \C!$\triRight/$! to establish the refinement types specifying the fact that the summation can be unrolled from both sides, cf. lines~\ref{lst:finSum:ref1} and~\ref{lst:finSum:ref2} of \Cref{fig:finSum}.
In addition, we provide a small collection of helpful theorems that can be used to manipulate and rewrite finite sums.
For example, consider the following two theorems:
\vspace{-0.25cm}
\begin{center}
\begin{minipage}[t]{0.43\linewidth}%
\begin{lstlisting}[style=haskell,xleftmargin=0pt,aboveskip=0pt,belowskip=0pt]

~\DEF{thm_sum_split}~ :: lo:#Z ->
  {hi:#Z | lo <= hi} -> k:~\Range/~ lo hi ->
  f:(~\Range/~ lo hi -> #R) -> 
  { $\HfinSumTT{lo}{hi}$ f = $\HfinSumTT{lo}{k-1}$ f + f k + $\HfinSumTT{k+1}{hi}$ f }
\end{lstlisting}
\end{minipage}%
\hfill\vrule width 0.6pt\hfill
\begin{minipage}[t]{0.56\linewidth}%
\begin{lstlisting}[style=haskell,xleftmargin=7pt,aboveskip=0pt,belowskip=0pt]
~\DEF{thm_sum_permute}~ :: lo:#Z -> hi:#Z ->
  f:(~\Range/~ lo hi -> #R) ->
  $\pi$:(~\Range/~ lo hi -> ~\Range/~ lo hi) ->
  $\tau$:(~\Range/~ lo hi -> ~\Range/~ lo hi) ->
  (x:~\Range/~ lo hi -> {$\pi$ ($\tau$ x) = x && $\tau$ ($\pi$ x) = x}) ->
  { $\HfinSumTT{lo}{hi}\,$f = $\HfinSumTT{lo}{hi}\,$(f $\circ$ $\pi$) }
\end{lstlisting}
\end{minipage}
\end{center}
\vspace{0.5em}

The theorem \C!thm_sum_split! on the left-hand-side allows to split a sum at a given index while the theorem \C!thm_sum_permute! establishes that the sum indices can be arbitrarily permutated.
This is achieved by the parameters \C!$\pi$! and \C!$\tau$! togehter with a function that shows that for any given \C!x! the functions \C!$\pi$! and \C!$\tau$! are inverse functions of one another.
A complete list of theorems is given in the Appendix.

\subsubsection{Logarithms}
\label{sec:logarithms}
Since \LH/ uses the SMT theory of real numbers to reason about Haskell's \C!Double! floating point values, we cannot implement the real mathematical logarithm as a Haskell function that exhibits the same theoretical properties.

We support verification of logarithmic properties by introducing the \emph{uninterpreted} function \logTwo/ on the logic level together with a minimal set of 3 axioms satisfied by \logTwo/, as given in~\Cref{fig:log-axioms}.
An uninterpreted function can be declared on the logic level using the \C!measure! keyword and axioms can be stated through functions that are annotated with the \C!assume! keyword.
Note that in these assumptions it is ensured that \logTwo/ is only applied to positive arguments.

\begin{figure}[t]
\begin{lstlisting}[style=haskell]
{-@ type $\mathbb{Z}^+$ = {i:Int | i > 0} @-} ~\label{lst:zpos}~
{-@ measure log~\textsubscript{2}~ :: $\Zpos$ -> $\Rnneg$ @-} ~\label{lst:logTwo}~

{-@ assume     log~\textsubscript{2}~_base~\label{lst:log2_base}~ :: {~\logTwo/~ 2 = 1} @-}
{-@ assume  log~\textsubscript{2}~_product~\label{lst:log2_product}~ :: x:$\Zpos$ -> y:$\Zpos$ -> {~\logTwo/~ (x*y) = ~\logTwo/~ x + ~\logTwo/~ y} @-}
{-@ assume log~\textsubscript{2}~_monotone~\label{lst:log2_monotone}~ :: x:$\Zpos$ -> {y:$\Zpos$ | x < y} -> {~\logTwo/~ x < ~\logTwo/~ y} @-}
\end{lstlisting}

\caption{Logarithm Axiomatisation in \LH/}
\label{fig:log-axioms}
\end{figure}
%

From this set of axioms we then prove properties such as the following:
\begin{lstlisting}[style=haskell]
{-@ thm_log~\textsubscript{2}~_ge_0 :: x:$\Zpos$ -> { ~\logTwo/~ x >= 0 } @-} ~\label{lst:thm_log2_ge_0}~
thm_log~\textsubscript{2}~_ge_0 x = if x == 1 then () else ~\logTwoMonotone/~ 1 x
\end{lstlisting}

Moreover the next logarithmic inequality will become essential for deriving cost bounds for meldable heaps in~\Cref{sec:meldable-heaps} as well as for randomised splay trees in~\Cref{sec:splay-trees} (cf.~\citet{Okasaki:1999}).
The full proof is explained in~\Cref{appendix:log}.
\begin{lstlisting}[style=haskell]
{-@ thm_log~\textsubscript{2}~_inequality :: x:$\Zpos$ -> y:$\Zpos$ -> {2 + ~\logTwo/~ x + ~\logTwo/~ y <= 2 * ~\logTwo/~ (x+y)} @-} ~\label{lst:thm_log2_inequality}~
\end{lstlisting}

\section{Case Studies}
\label{sec:case-studies}

We evaluate the cost-aware probability monad with respect to the ease of verification of both quantitative and qualitative properties of probabilistic programs.
To this end, we present the implementation and verification of several classical probabilistic algorithms and data structures: 
\emph{meldable heaps}, \emph{randomised quicksort}, \emph{randomised splay trees}, \emph{random permutations}, and the \emph{hiring problem}. 
These case studies demonstrate the probability monad's ability to succinctly represent probabilistic computations.
Additionally, the usage of the monad makes it possible to seamlessly establish the functionial correctness of programs without any overhead by exploiting the rich refinement type system of \LH/ together with SMT-supported automation.

In addition to the case studies presented in this section, we describe a concise formalised proof of \emph{Bayes' theorem} as well as the verification of \emph{randomised quickselect} in the Appendix.

\subsection{Meldable Heaps}
\label{sec:meldable-heaps}

We exemplify (nearly) fully automated expected cost analysis enabled by the probability monad via the implementation of a simple probabilistic data structure, namely \emph{meldable heaps} \cite{GambinM98}.

\begin{figure}
\begin{lstlisting}[style=haskell,numbers=left]
{-@ ~\DEF{meld}~ :: Ord #a => h1:~\Heap/~ #a -> h2:~\Heap/~ #a ->
            {d:~\ProperDist/~ ({h:~\Heap/~ #a | bag h = Bag_union (bag h1) (bag h2)}) ~\label{lst:meld-functional-correct}~
              | $\HEcost{\mathtt{d}}$ <= ~\logTwo/~ (size h1) + ~\logTwo/~ (size h2)} / [size h1 + size h2] @-} ~\label{lst:meld:termination}~
meld h1@~\Empty/~ h2 = ~\logTwoGeZero/~ (size h1) |> ~\logTwoGeZero/~ (size h2) |> ~\dunit/~ h2
meld h1 h2@~\Empty/~ = ~\logTwoGeZero/~ (size h1) |> ~\logTwoGeZero/~ (size h2) |> ~\dunit/~ h1
meld (~\Heap/~ k1 l1 r1) (~\Heap/~ k2 l2 r2) | k1 <= k2 = -- ~\hsCommentStyle{} see \Cref{lst:meldableheaps}~
                                     | k2 < k1 = -- ~\hsCommentStyle{} Symmetric to the previous case~
\end{lstlisting}
\caption{Randomised Meld Function}
\label{fig:meld}
\end{figure}

Meldable heaps derive their name from their central operation of \emph{melding}
that suitably combines two heaps $h_1$ and $h_2$ into a new one. The innovation of
\emph{meldable heaps} is to perform this operation subject to a fair coin toss, resulting
in an (expected) bound of the cost of the operation of $\log_2(\size{h_1}) + \log_2(\size{h_2})$, where $\size{h}$ denotes the number of leaf nodes of the heap $h$.

Our contributions in this case study are twofold.
First, we show the functional correctness of melding in a fully automated way.
Second, through providing two hints to the solver regarding properties of logarithms,
the optimal upper bound on the expected cost is established by \LH/ without further guidance.
In this sense, our probability monad provides the user of \LH/ with significantly increased
automation support for the analysis of a typical (probabilistic) data structure, cf.~\cite{Matheja20,LMZ:2022}.

\paragraph{Functional Correctness.}
The implementation of meldable heaps is given in \Cref{fig:meld}.
The function \meld/ implements the melding functionality, making use of the \coin/ function to represent fair coin tosses.

In order to show termination of the \meld/ function, we need to provide a termination metric, as recursion happens on both input heaps and \LH/ only performs automatic termination checking on a single parameter.
Using the measure functionality of \LH/ we define a measure \C!size! that computes the number of leaves of a heap.

To this end, the expression \C!/ [size h1 + size h2]! in line~\ref{lst:meld:termination} instructs \LH/ to use the sum of the heaps sizes as a \emph{termination metric}.
In general, a termination metric has the form \C!/ [$e_1$, $\ldots$, $e_n$]! and can be provided by the user when \LH/ cannot automatically establish termination.
Every expression $e_i$ can refer to the function arguments and must always evaluate to a natural number.
Furthermore, for each recursive call, the expressions $(e_1, \ldots, e_n)$ must decrease lexicographically, which proves termination \cite{tp-for-all}.

As \meld/ either recurses on \C!h1! or \C!h2! it holds that \C!size h1 + size h2! decreases in every recursive call
Note that it is crucial to state that \C!size! always results in a positive integer because \LH/ can then automatically verify that the sizes of the left and right sub-heaps are strictly smaller than the overall size of the heap.

The refinement type of the result of \meld/ states that for any ordered type of elements $\alpha$, the melding of two heaps over such
elements results in a distribution of heaps of the union the elements in the two heaps (cf.~\Cref{fig:meld}, line~\ref{lst:meld-functional-correct}).
The \C!bag! function is a measure that computes a multiset containing all elements of a given heap.
Moreover, \C!Bag_union! is provided by \LH/ to denote multiset union.
Due to the SMT-based automation, the functional correctness can be established fully automatically.

\paragraph{Expected Cost Analysis.}
With the combination of our monad operators and \LH/'s automation, we are able to derive the optimal upper bound of the expected cost of \meld/ with minimal overhead, where each of the recursive calls is accounted for with cost~$1$.
Since the functions provided by the probability monad keep track of incurred costs we only need to provide guidance regarding logarithmic reasoning and no theorems about the algorithm are required for the expected-cost analysis, cf.~\Cref{table:loc}.

If either of the two heaps is empty, the expected cost of \meld/ is 0.
Since \logTwo/ is uninterpreted, we need to give a hint to \LH/ that the logarithms of the heap sizes are non-negative.

It remains to show the validity of the bound for the scenario where both heaps are non-empty.
Let \C!h1 = $\Heap/$ k1 l1 r1!, and \C!h2 = $\Heap/$ k2 l2 r2!.
If \C!k1 <= k2!, the heap \C!h2! can be melded with either \C!l1! or \C!r1!.
Due to the refinement annotations tracking expected costs and by using the cost bound obtained from the recursive calls, \LH/ is able to deduce that
{\footnotesize \( \HEcostOpen\meld/\ \mathtt{h1\ h2}] \leqslant 1 + 0.5 \cdot \logTwo/\ (\mathtt{size\ l1}) + 0.5\cdot \logTwo/\ (\mathtt{size\ r1}) + \logTwo/\ (\mathtt{size\ h2}) \)}.
By applying \C!$\logTwoIneq/$ (size l1) (size r1)! (see line~\ref{lst:meld:logIneq} in \Cref{lst:meldableheaps}) and the fact that \C!size h1! is equal to \C!size l1 + size r1!, we arrive at the desired optimal upper bound of
{\footnotesize \( \logTwo/\ (\mathtt{size\ h1}) + \logTwo/\ (\mathtt{size\ h2}). \)}

\paragraph{Related Work}

\citet{LMZ:2022} employ meldable heaps and randomised splay trees as
case studies, where the precises expected amortised cost can be inferred
fully automatically by their prototype implementation \atlas. Our formalisation
of meldable heaps establishes the same precise bounds, that can be checked automatically,
leveraging the automation support of \LH/.
\citet{HaselwarterLMG024} use \tachis{} to verify the functional correctness and expected cost for meldable heaps.
Due to the more general implementation using a resourceful comparator and needing to deal with mutable state, the size of the \tachis{} formalisation has significant overhead with respect to our formalisation.
\citet{Matheja20} uses a probabilistic separation logic to give a pen-and-paper proof of the central theorem of \cite{GambinM98}, i.e., that the expected length of a chosen path in a tree is at most logarithmic in the total number of nodes.

\medskip
\verificationEffort{We claim that this case study demonstrates nearly full automation
since only two hints to the solver regarding properties of logarithms need to be provided for the verification of the optimal upper bound on the expected cost of \meld/.} 
\newpage
\subsection{Randomised Quicksort}
\label{sec:rand-quick}

\begin{figure}[t]
\begin{lstlisting}[style=haskell,numbers=left]
{-@ ~\DEF{rquick}~ :: Ord #a => {l:~\List/~ #a | ~\noDuplicates/~ l} -> ~\label{fig:quick-sort:rquick}~
        ~\ProperDist/~ {r:~\List/~ #a | sorted r && elems l = elems r} / [length l, 1] @-} ~\label{fig:quick-sort:rquick-sig}~
rquick ~\Nil/~ = ~\dunit/~ ~\Nil/~
rquick l = ~\uniform/~ 0 (length l - 1) (rquickBody l)

{-@ ~\DEF{rquickBody}~ :: Ord #a => {l:~\List/~ #a | ~\noDuplicates/~ l} -> {k:#N | k < length l} ->
        ~\ProperDist/~ {r:~\List/~ #a | sorted r && elems l = elems r} / [length l, 0] @-} ~\label{fig:quick-sort:rquickBody-sig}~
rquickBody l k = let e = ref l k in
                 let (smaller, bigger) = ~\partition/~ e (deleteAt l k) in ~\label{fig:quick-sort:rquickBody-def}~
                 ~\tick/~ (length l - 1) (~\combine/~ (rquick smaller) (rquick bigger) (merge e))

{-@ ref :: l:~\List/~ #a -> {k:#N | k < length l} -> #a @-}

{-@ deleteAt :: l:~\List/~ #a -> {k:#N | k < length l} -> {r:~\List/~ #a | 1 + length r = length l} @-}

{-@ ~\DEF{partition}~ :: Ord #a => x:#a -> {l:~\List/~ a | ~\noDuplicates/~ l && not (x ~$\in$~ l)} ->
        {t:(~\List/~ {v:#a | v < x}, ~\List/~ {v:#a | x < v})
          | length l = length (fst t) + length (snd t) &&
            elems l = Set_cup (elems (fst t)) (elems snd t)} @-}

{-@ merge :: Ord #a => x:#a -> {l1:~\List/~ {v:#a | v < x} | sorted l1} ->
    {l2:~\List/~ {v:#a | x < v} | sorted l2} ->
    {r:~\List/~ #a | sorted r && elems r = Set_cup (elems l1) (elems l2)} @-}
\end{lstlisting}

\caption{Randomised Quicksort}
\label{fig:quick-sort}
\end{figure}

Quicksort is a divide-and-conquer sorting algorithm that
partitions the input list into smaller parts, depending on a chosen pivot element, and recursively sorts the resulting parts.
While quicksort is fast for random input lists, the order of comparisons degrades to $\bO(n^2)$ in the worst case, if the choices of pivot elements are bad.
Randomised quicksort solves this problem by always choosing pivot elements uniformly at random from the input list, ensuring that the expected number of comparisons is in $\bO(n\log n)$ for any input.

Furthermore, the expected number of comparisons performed by randomised quicksort on a list with $n$ distinct elements is equal to $2(n+1)H_n - 4n$ \cite{EberlHN20}.
In additon to the functional correctness of the algorithm, which could be automatically deduced by \LH/, we were able to derive the closed-form solution for the expected number of comparisons by leveraging \LH/'s theorem proving capabilities.
In the following, we exemplify the implementation of the algorithm in the probability monad, after which the derivation of the expected number of comparisons is illustrated.
We end this case study by discussing related work.

\paragraph{Implementation.}
\Cref{fig:quick-sort} contains the implementation of the randomised quicksort algorithm using our probability monad.
We kindly note that the functional correctness is automatically verified by \LH/ from the implementation in combination with the specified refinement type annotations.
Thus, even though the usage of the monad adds a layer for expressing probabilistic functions, the refinement types are preserved by the provided monadic functions.

In line with the literature, we assume that all elements in the input list are unique as realised through the predicate \noDuplicates/.
Consider the refinement type of the function \rquick/ at line~\ref{fig:quick-sort:rquick}.
Given a list without duplicated elements over an ordered type \C!#a!, the function \rquick/ returns a proper distribution containing sorted lists whose set of elements is equal to the set of elements of the input list.
These properties are formalised using the measures \sorted/ and \elems/.
\LH/ requires termination metrics for all mutually recursive functions and thus also for \rquick/ and \rquickBody/ (see lines~\ref{fig:quick-sort:rquick-sig} and~\ref{fig:quick-sort:rquickBody-sig}, respectively).
We use a lexicographic termination metric, as in calls to \rquickBody/ the second component decreases from 1 to 0 and in calls to \rquick/ the first component, i.e., \C!length l! decreases.

Our implementation makes use of the probability monad as follows.
In the case that the input list is not empty, the algorithm samples an index from the list uniformly at random using the \uniform/ function and passes it to \rquickBody/ defined in line~\ref{fig:quick-sort:rquickBody-def}.
There, the expression \C!ref l k! computes the element at index \C!k! in list \C!l! and binds it to \C!e!.
Moreover, \C!deleteAt l k! evalutes to a list where the element at index \C!k! is removed.
The value \C!e! is used to partition the list \C!deleteAt l k! into a list of all the smaller values and a list with all the bigger values with respect to \C!e!.
Since \C!e! is compared with all the other elements, there are in total \C!length l - 1! comparisons to account for.
This is represented by the ticking functionality of the probability monad.
Then, the two lists are sorted recursively and by using the \combine/ function, the sorted parts are merged together with the help of the function \C!merge!.
The syntax \C!Set_cup! denotes the union of sets.

\paragraph{Expected Number of Comparisons.}

As clarified in ~\Cref{sec:overview:quicksort}, the formalisation establishing the closed form solution for the expected number of comparisons performed by \rquick/ is divided into two steps, where we make extensive use of theorems from our summation library to closely follow the pen-and-paper proof given in~\cite{cichon-quick}.

We subsequently explain in detail how the two steps described in the overview are carried out within our \LH/ formalisation.

\paragraph{Step 1: Extracting a recurrence for \rquick/.}

The formalisation proves by induction over the length of the input list \C!l! that \C!$\HEcostOpen$rquick l$]$! is equal to \C!qs_rec (length l)!.
Since theorems in \LH/ are functions that return the unit type refined with a logical predicate proved by the theorem, we use the termination metric \C![length l]! to encode an inductive proof.
That is, all recursive calls to the theorem must occur with smaller lists.

For an empty list it holds trivially that \C!$\HEcostOpen$rquick Nil$]$ = 0!, as \dunit/ incurs no costs.
This case is also automatically verified by \LH/ as the refinement type of \dunit/ states that the cost of the resulting distribution is zero.

Otherwise, if \C!l! is non-empty, the cost of \C!rquick l! can be easily determined by exploiting the refinement type of \uniform/.
Let \C!n! denote the length of \C!l!.
Remember, that the expected cost of \C!$\uniform/$ lo hi f! is given by \C!1/(hi-lo+1) * $\HfinSumTT{lo}{hi}$ ($\HEcostOnly$ $\circ$ f)!.
Thus, the cost of \C!rquick l! is given by \C!1/n * $\HfinSumTT{0}{n-1}$ ($\HEcostOnly$ $\circ$ rquickBody l)!.
The cost of \C!rquickBody l k! is equal to \C!rquickBodyCost!, as stipulated by the refinement on line~\ref{fig:quick-sort:rquickBody-sig} in \Cref{fig:quick-sort}.
The automatic verification of this equality is possible because of the cost-aware typings for \tick/ and \combine/, since for all proper distributions \C!d1! and \C!d2! we have that \C!$\HEcostOpen$tick d1$]$ = 1 + $\HEcostOpen$d1$]$! and \C!$\HEcostOpen$combine d1 d2 f$]$ = $\HEcostOpen$d1$]$ + $\HEcostOpen$d2$]$!.

%
%

In order to abstract away from the specific choices of pivot index, we use the notion of \emph{rank}.
That is, the rank of a value with respect to a list is the number of elements that are smaller than the respective value.

In \LH/ we realise this through a function \C!rank :: Ord #a => List #a -> #a -> #N! and use it to further refine the type of \partition/ as highlighted in green:

\begin{lstlisting}[style=haskell]
{-@ partition :: Ord #a => x:#a -> {l:List a | ~\noDuplicates/~ l && not (x ~$\in$~ l)} ->
        {t:(List {v:#a | v < x}, List {v:#a | x < v})
          | ~\greenBox{length (fst t) = rank l x \&\& length (snd t) = length l - 1 - rank l x}~ && $\color{gray}\ldots$} @-}
\end{lstlisting}

This additional annotation is automatically verified by \LH/ and can be employed together with applications of the induction hypothesis to replace the expected cost of \rquick/ to \C!qs_rec!.
That is, in the current state of the proof, we have shown that \C!$\HEcostOpen$rquick l$]$! is equal to
\begin{lstlisting}[style=haskell]
n-1 + 1/n * $\HfinSumTT{0}{n-1}$(\k -> qs_rec (rank l (ref l k)) + qs_rec (n-1-(rank l (ref l k))))
\end{lstlisting}
\newsavebox\natLess
\begin{lrbox}{\natLess}
\lstinline[style=haskell,basicstyle=\color{black}\footnotesize\ttfamily]!{v:Nat | v < c}!
\end{lrbox}

We introduce the function \C!idx_rank! that maps a given index to a rank by looking up the element and computing its rank.%
\footnote{We write $\mathbb{N}^{< \mathtt{c}}$ for the refinement type {} \usebox{\natLess}} 

\begin{lstlisting}[style=haskell]
{-@ idx_rank :: Ord #a => {l:List #a | ~\noDuplicates/~ l} -> $\mathbb{N}^{< \mathtt{length\ l}}$ -> $\mathbb{N}^{< \mathtt{length\ l}}$ @-}
idx_rank l k = rank l (ref l k)
\end{lstlisting}

Now, the proof employs the summation theorem \thmSumRewrite/ to rewrite the summation to the equivalent form depicted below.
\begin{lstlisting}[style=haskell]
n-1 + 1/n * $\HfinSumTT{0}{n-1}$ ((\k -> qs_rec k + qs_rec (n-1-k)) $\circ$ (idx_rank l))
\end{lstlisting}

The final step of the proof consists in removing the call to \C!idx_rank l! in the summation.
This is achieved by the theorem \thmSumPermute/ that allows to permute the sum indices.
To this end, we need an inverse of \C!idx_rank! and define the function \C!rank_idx! which converts the rank of a given element to its index by exploiting the fact that the rank equals the position of an element in an already sorted list and looking up the index of this element in the original list.

\begin{lstlisting}[style=haskell]
{-@ rank_idx :: Ord #a => {l:List #a | ~\noDuplicates/~ l} -> $\mathbb{N}^{< \mathtt{length\ l}}$ -> $\mathbb{N}^{< \mathtt{length\ l}}$ @-}
rank_idx l rk = idxOf l (ref (sorted l) rk)
\end{lstlisting}

Then, we establish that the functions \C!rank_idx l! and \C!idx_rank l! induce a permutation on {\footnotesize $\mathbb{N}^{< \mathtt{length\ l}}$} and prove the following theorem:
\begin{lstlisting}[style=haskell]
{-@ thm_idx_rank_bijection :: Ord #a => {l:List #a | ~\noDuplicates/~ l} -> x:$\mathbb{N}^{< \mathtt{length\ l}}$ ->
        { idx_rank l (rank_idx l x) = x && rank_idx l (idx_rank l x) = x } @-}
\end{lstlisting}
Note that it is crucial that \C!l! contains no duplicated elements since otherwise two elements may share the same rank.

Finally, with the help of \thmSumPermute/ the call to \C!idx_rank l! can be elimininated and we can conclude the formalised proof establishing the desired equality between \C!$\HEcostOpen$rquick l$]$! and \C!qs_rec (length l)!.

\paragraph{Step 2: Deriving a closed form of \C!qs_rec!.}

For the second step, we roughly follow the steps of the pen-and-paper proof given in \citet{cichon-quick}.
First, \C!qs_rec! is first simplified into the recurrence \C!qs_rec_simp!:

\begin{lstlisting}[style=haskell]
qs_rec_simp 0 = 0
qs_rec_simp n = n-1 + 2/n * $\HfinSumTT{0}{n-1}$ qs_rec_simp
\end{lstlisting}

The equivalence follows from splitting the sum in \C!qs_rec! using \thmSumLinear/ and reversing the order of the summation containing \C!qs_rec (n-1-k)! with the help of \thmSumReverse/.
We further transform the recurrence \C!qs_rec_simp! into a summation using the following theorem, which we use to exemplify the power of automation that \LH/ offers for theorem proving.

\begin{lstlisting}[style=haskell]
{-@ thm_q_rec_helper :: n:#N -> { q_rec_simp n = 2*(n+1) * $\HfinSumTT{1}{n}$ (\k -> (k-1) / (k*(k+1))) } @-}
thm_q_rec_helper 0 = ()
thm_q_rec_helper 1 = ()
thm_q_rec_helper n = $\HfinSumTT{0}{n-1}$ q_rec_simp ~\triRight/~ $\HfinSumTT{1}{n}$ (\k -> (k-1) / (k*(k+1))) ~\triRight/~ thm_q_rec_helper (n-1)
\end{lstlisting}

The theorem \C!thm_q_rec_helper! is proved by induction on the input \C!n!.
Note that \LH/ automatically tries to use the first argument of functions for termination checking and we do not need to specify the termination metric \C![n]! here.
The cases for \C!0! and \C!1! are verified without further input and thus it suffices to use the unit value \C!()!.
In the last line, it is known that \C!n! is greater or equal than \C!2! as the previous equations did not match.
Using the triangle operator \C!$\triRight/$! we add the relevant summation expressions.
This has the effect that \LH/ takes their refinement types into account and can unfold them as deeply as possible.
This is achieved by \LH/'s \emph{Proof By Logical Evaluation} (\textsf{PLE}) proof search algorithm \cite{ref-refl}.
Lastly, the theorem is called recursively, which corresponds to an invocation of the induction hypothesis.

With a similar theorem we establish the equality%
\begin{tightcenter}
\C!$\HfinSumTT{1}{n}$ (\k -> (k-1) / (k*(k+1))) = 2/(n+1) + (harmonic n) - 2!.
\end{tightcenter}
From which we can derive within \LH/ the final expression 
\C!2*(n+1) * (harmonic n) - 4*n! for the expected number of comparisons on randomised quicksort on a list of length \C!n!.

\paragraph{Comparison with Related Work}

\citet{EberlHN20} establish the same result about the expected number of comparisons of randomised quicksort in the proof assistant Isabelle/HOL by also exploiting the index-rank bijection.
While their formalisation also uses a monadic approach for representing probabilistic algorithms, there is no built-in support for reasoning about expected costs as distributions are represented as \emph{probability mass functions}, that is, functions with type $\alpha \to [0,1]$.
Instead, they use pairs of outcomes together with costs which demands that the cost for every monadic operation needs to be explictily passed around, whereas the
costs for our implementation are automatically tracked and can be inferred in a syntax-directed manner from the refinement types of the functions provided by the probability monad.
Our formal proof is approximately the same size as the one from \citet{EberlHN20}.
Moreover, because of the used representation of distributions in Isabelle/HOL, the formalised quicksort algorithm cannot be executed whereas we carried out the mechanised proof on a real-world implementation.

An upper bound on the number of comparisons is presented in \citet{WeegenMcKinna} using the \coq{} proof assistant and the existing real number theory of the \coq{} standard library.
Distributions are modelled as decision trees, where a tree is either a leaf node containing an outcome or a non-empty list of trees $[T_1, \ldots, T_n]$ with the meaning that the outcomes in subtree $T_i$ happen with probability $\sfrac{1}{n}$.
In order to represent probabilistic programs the decision trees are equipped with a monadic structure.
The development is based on the pen-and-paper proof given in \citet{Cormen:2009} which requires a reduction to expected pairwise comparison counts between elements of the input list.
Again, the size of the presented \LH/ formalisation of quicksort is on par.

A tail bound on the expected number of comparisons of randomised quicksort is mechanised in~\citet{Tassarotti018} using a monadic encoding using probability mass functions.

\citet{HaselwarterLMG024} introduce the higher-order separation logic \tachis{} to reason about the expected cost of probabilistic programs.
Randomised quicksort is mechanised as a case study therein, where an upper bound on the expected number of comparisons is derived whose 
proof comprises about twice the lines of our \LH/ development.

\medskip
\verificationEffort{
We believe this case study demonstrates that our monad in combination with \LH/'s automation allows for complex proof development while keeping up with formal proof developments from mature proof assistants.
}
\subsection{Randomised Splay Trees}
\label{sec:splay-trees}

In this case study we carry out an expected \emph{amortised} cost analysis of randomised splay trees \citet{ALBERS2002213} in which we verify the state-of-the-art bounds reported in \cite{LMZ:2022}.
Splay trees were first introduced by \citet{sleator1985self} where it was shown that the efficiency of the data structure relies on the restructuring heuristic called \emph{splaying} in which accessed elements are moved to the top of the tree using rotations.
While some accesses in a splay tree may be expensive, i.e., if elements stored in leaves need to be retrieved, whole sequences of operations performed on splay trees are cheap.
This concept is kown as \emph{amortised cost}.

In our case study we use the so-called \emph{physicist's method} performing amortised cost analysis where a \emph{potential function} $\phi$ is employed that maps the state of a datastructure $D$ into a real number, where $\phi(D)$ is called the \emph{potential} of $D$ \cite{tarjan1985amortized}.
If $c$ denotes the cost of some operation performed on $D$ and $D'$ is the state of $D$ after performing said operation, then $c + \phi(D') - \phi(D)$ is denoted the \emph{amortised time}.
In a probabilistic setting, performing an operation on $D$ results in a distribution of states.
Hence, for the amortised time one needs to replace the cost $c$ by the expected cost and $\phi(D')$ by the expected value of the resulting distribution with respect to expectation $\phi$.

The randomised splay trees proposed in \cite{ALBERS2002213} exploit the fact when splaying, it is not necessary to perform every rotation. 
Instead, the rotations are only performed with some fixed probability.
An automated expected amortised cost analysis of this data structure is performed in \cite{LMZ:2022}, which uses a first-order functional language with support for sampling over discrete distributions.
We translated the given implementation of \C!splay! into our cost-aware probability monad and were able to semi-automatically verify the same bound on the expected amortised time.
The cost model employed by the automated analysis uses a cost of $\sfrac{1}{2}$ for both rotations and recursive calls and a rotation is performed with a fixed probability of $\sfrac{1}{2}$.

While functional correctness is established fully automatically by \LH/, some hints regarding logarithmic reasoning and changes in potential resulting from rotations are needed for a successful verification for the upper bound of $\sfrac{9}{8} \log_2(|t|)$ on the amortised time for splaying performed on a tree $t$, where $|t|$ denotes its number of leaves.
For readability, we will also use the notation \C!|t|! to denote the size of tree \C!t! in listings.

The main result is given in the following annotations, where \C!treeSet! is a measure that computes the set of elements contained in the tree:
\begin{lstlisting}[style=haskell]
{-@ data BST #a = Empty | Node (x :: #a) (BST {v:#a | v < x}) (BST {v:#a | x < v}) @-}

{-@ splay :: Ord #a => #a -> t:BST #a ->
      {d:~\ProperDist/~ {r:BST #a | treeSet t = treeSet r && |t| = |r|}
        | $\HEcostTT{d}$ + $\HEval{\phi\Delta\,\,\mathtt{t}}{d}$ <= 9/8 * ~\logTwo/~ |t|} @-}
\end{lstlisting}
The distribution computed by \C!splay x t! contains binary search trees containing the same elements as the input tree \C!t!.
Moreover, the formalisation verifies the bound on the amortised cost by showing that the expected cost of \C!splay! plus the expected difference in the potential resulting from splaying satisfies the given upper bound.
That is, \C!$\phi\Delta$ :: BST #a -> BST #a -> #R! computes the difference of potentials, i.e., \C!$\phi\Delta$ t1 t2 = $\phi$ t1 - $\phi$ t2!, where we define $\phi$ as a reflected function in \LH/ as given in \cite{LMZ:2022}.

\begin{figure}
\begin{lstlisting}[style=haskell,numbers=left]
splay x t@(Node c cl@(Node b bl br) cr) =
  if x < c && x < b && nonEmpty bl then -- zig-zig case
    (~\logTwoIneq/~ |bl| (|br| + |cr|), ~\logTwoMonotone/~ |bl| |cl|) |> ~\label{lst:splay:log1}~
        coin${}^{\phi\Delta\,\mathtt{t}}$ 0.5
            (tick${}^{\phi\Delta\,\mathtt{t}}$ 0.5 (fMap${}^{\phi\Delta\,\mathtt{t}\, \leqslant\, \phi\Delta\,\mathtt{bl}\, +\, \mathtt{v}}$ (tick${}^{\phi\Delta\,\mathtt{bl}}$ 0.5 (splay x bl)) (zig_zig_rot t))) ~\label{lst:splay:rot}~
            (fMap${}^{\phi\Delta\,\mathtt{t}\, \leqslant\, \phi\Delta\,\mathtt{bl}\, +\, \mathtt{0}}$ (tick${}^{\phi\Delta\,\mathtt{bl}}$ 0.5 (splay x bl)) (zig_zig_norot t)) ~\label{lst:splay:norot}~
  where v = 3/4 * (~\logTwo/~ (|br| + |cr|) + ~\logTwo/~ |t| - ~\logTwo/~ |bl| - ~\logTwo/~ |cl|) ~\label{lst:splay:v}~
\end{lstlisting}

\caption{Zig-Zig case of \texttt{splay} function}
\label{fig:splay-zigzig}
\end{figure}

In order to keep track of the changes in potential caused by tree rotations, we introduce a variant of \fMap/ which supports keeping track of upper bounds on expected values.

\noindent\begin{minipage}{\linewidth}
\begin{lstlisting}[style=haskell]
{-@ fMapUB :: v:#R -> e1:(#b -> #R) -> e2:(#a -> #R) ->
      d:~\SubDist/~ #a -> f:(x:#a -> {v:#b | e1 v <= e2 x + v}) ->
      {r:~\SubDist/~ #b | ~\zsavepos{pos:fMapUBA}~r = ~\fMap/~ d f && ~\probMass/~ r = ~\probMass/~ d && $\HEcostTT{r}$ = $\HEcostTT{d}$ &&
~\zsavepos{pos:fMapUBB}\hspace{\inteval{\zposx{pos:fMapUBA}-\zposx{pos:fMapUBB}}sp}~$\HEvalTT{e1}{r}$ <= $\HEvalTT{e2}{d}$ + (~\probMass/~ d) * v} @-}
\end{lstlisting}
\end{minipage}

Note that it is automatically verified that the results of \C!fMapUB! and \fMap/ coincide.
Additionally, the refinement type of \C!fMapUB! is automatically verified by \LH/ and its implementation performs just a simple recursion on the given distribution.

The function \C!fMapUB! can now be used in the verification for \C!splay! by employing potential differences as expectations.
Consider a call \C!splay x t!.
Since we are interested in the expected potential differences between tree \C!t! and the trees in the resulting distribution, we employ the partially applied function \C!$\phi\Delta\,$t :: BST #a -> #R! as an expectation.
By supplying such an expectation to the functions of the probability monad, we can exploit the cost bound given by the recursive calls to \C!splay!.

One case of the splaying implementation is given in~\Cref{fig:splay-zigzig}.
For readability we write the expectation parameter passed to functions of the probability monad as a superscript and we write \C!fMap${}^{\mathtt{e1} \leqslant \mathtt{e2} + \mathtt{v}}$! for \C!fMapUB v e1 e2!.

In line~\ref{lst:splay:log1} we state two logarithmic facts required for the verification.
The rotation is performed in line~\ref{lst:splay:rot} with a probability of $\sfrac{1}{2}$ whereas in line~\ref{lst:splay:norot} no rotation is performed, which is implemented by the functions \C!zig_zig_rot! and \C!zig_zig_norot! respecively.
These functions specify a bound on the potential difference with respect to the tree \C!t! and the subtree resulting from the recursive splay operation which is then exploited by \C!fMapUB!.
In the case that no rotation is performed, potential remains unchanged and if a rotation is performed, the constant \C!v! given in line~\ref{lst:splay:v} is required.

\paragraph{Related Work}

The prototype implementation \atlas{} introduced in \cite{LMZ:2022} is able to perform a fully automated analysis for the expected amortised cost of splay trees and similar probabilistic data structures.
While our formalisation requires several hints for \LH/ regarding changes in potential and logarithmic facts, the functional correctness is fully automatically verified in contrast to \atlas/.
\citet{NipkowB19} develop a framework for amortised analysis within \isabelle{} wherewith a deterministic splay function is verified.
However, the framework lacks support for probabilistic algorithms.
In \citet{vanBrgge2024} \LH/ is used to carry out amortised analyses of binomial heaps and finger trees.
While logarithmic reasoning was also required, it sufficed to only implement $\lfloor \log_2(\cdot)\rfloor$ using iterated division by 2.
This approach is not feasible for our use case as we require exact properties of the real mathematical logarithm.
Likewise, probabilistic programs were out of scope in this work.

\medskip
\verificationEffort{
The formalisation of randomised splay trees uses the probability monad's support for ghost parameters to manually reason about changes to the potential of trees caused by rotations in combination invocations of necessary logarithmic facts.
For future work, we believe that the automatic instantiation of ghost parameters provides a higher level of automation.
Nonetheless, the combinators of the probability monad proved to be powerful enough to develop a formalisation without the use theorems, cf.~\Cref{table:loc}.
}
\subsection{Random Permutations}
\label{sec:random-perm}

\begin{figure}[t]
\begin{lstlisting}[style=haskell,numbers=left]
{-@ ~\DEF{permute}~ :: l:~\List/~ #a -> ProperDist ({l':~\List/~ #a | elems l = elems l'}) / [length l, 1] @-}
permute ~\Nil/~ = ~\dunit/~ ~\Nil/~
permute l = ~\uniform/~ 0 (length l - 1) (~\permuteBody/~ l)

{-@ ~\DEF{permuteBody}~ :: l:~\List/~ #a -> {i:#N | i < length l} ->
        ProperDist ({l':~\List/~ #a | head l' = ref l i && elems l = elems l'}) / [length l, 0] @-} ~\label{fig:rand-perm:permuteBody-annot}~
permuteBody l i = ~\fMap/~ (~\permute/~ (deleteAt l i)) (cons (ref l i))

{-@ cons :: x:#a -> l:~\List/~ #a ->
      {r:~\List/~ #a | head r = x && length r = length l + 1 && elems r = Set_add x (elems l)} @-}
\end{lstlisting}

\caption{Computing uniform random permutations}
\label{fig:rand-perm}
\end{figure}

A common technique to avoid unfavourable inputs is to randomly permute a given input list.
The goal of this case study is to showcase how our probability monad can be used to implement and verify an algorithm which generates random permutations.
In the remainder of this section, we first explain the implementation of the algorithm before its verification using \LH/'s theorem proving capabilities is demonstrated.

\paragraph{Implementation}

Given a list \C!l!, the function \permute/ (see~\Cref{fig:rand-perm}) uses the probability monad to generate a proper distribution of lists which are permutations of \C!l!.
Note that the refinemenet type annotations in line~\ref{lst:permute} express that for every list \C!l'! in the resulting distribution, the set of elements of \C!l'! equals the set of elements of \C!l!.
Similar to the randomised quicksort case study, we use \uniform/ to sample a list index which is passed to \permuteBody/.

Given a list \C!l! and index \C!i! such that \C!i < length l! the function \permuteBody/ uses \deleteAt/ to compute the list resulting from removing the \C!i!-th element of \C!l!.
The recursive call \C!$\permute/$ (deleteAt l i)! evaluates to the distribution of permutations of \C!deleteAt l i! and it remains to add the removed element back to the resulting lists.
This is achieved through \fMap/ in combination with \C!cons (ref l i)!, i.e., the element at index \C!i! in list \C!l! is prepended to each list in the distribution computed by \C!$\permute/$ (deleteAt l i)!.
Note that the refinement type of \C!permuteBody l i! guarantees that for every list \C!l'! in the resulting distribution, the first element of the list, i.e., \C!head l'!, is equal to the \C!i!-th element of \C!l!.
This property will be exploited in the case study formalising the hiring problem presented in~\Cref{sec:hiring}.

\paragraph{Verification}

Even though the refinement type annotations of \permute/ state that the element set of the lists contained in the computed distribution equal the set of elements of the given input list, this property does not suffice to show that we in fact generate random permutations.
Consider, for instance, a simple definition such as \C!$\permute/$ l = dunit l! which also satisfies the aforementioned refinement type.
To this end, we formalise within \LH/ that \permute/ generates \emph{uniform} random permutations, that is, the probability of every permutation of the input list in the resulting distribution is exactly the same.
To be more precise, we develop the \LH/ theorem \C!thm_permute_uniform! which, given lists \C!l1! and \C!l2! of the same length \C!n! without duplicated elements, proves that the probability of \C!l2! in the distribution \C!$\permute/$ l1! is equal to \C!1 / (1 * 2 * ... * n)!, as shown below
%
\begin{lstlisting}[style=haskell]
{-@ thm_permute_uniform :: {l1:~\List/~ #a | ~\noDuplicates/~ l1} ->
        {l2:~\List/~ #a | ~\noDuplicates/~ l2 && elems l1 = elems l2 && length l1 = length l2} ->
        { ~\probOf/~ l2 (~\permute/~ l1) = 1 / (~\factorial/~ (length l1)) } @-} ~\label{thm_permute_uniform:equality}
\end{lstlisting}

The formalised proof proceeds by pattern matching on the shape of \C!l1! and \C!l2!.
With the help of the totality checker of \LH/ we only have to consider two cases, where either both lists are empty, i.e., equal to \C!Nil! or both lists are nonempty.
Hence, as the refinement types constrain the shape of \C!l1! and \C!l2!, we do not have to take care of cases like \C!l1 = Nil! and \C!l2 = Cons x xs!.

The case where both lists are empty is automatically verified as \LH/'s \textsf{PLE} algorithm simply reduces both sides of the equality in line~\ref{thm_permute_uniform:equality} of~\Cref{fig:rand-perm} to \C!1!.
Otherwise \C!l1! and \C!l2! are of the form \C!Cons x1 x1s! and \C!Cons x2 x2s!, respectively and let \C!n! denote \C!length l1!.
We employ a helper theorem, which allows to rewrite \C!$\probOf/$ l2 ($\uniform/$ 0 (n-1) ($\permuteBody/$ l1))! into the equivalent expression \C!(1 / n) * $\HfinSumTT{0}{n-1}$ (probOf l2 $\circ$ $\permuteBody/$ l1)!, i.e., \uniform/ admits the already known sum representation for expected costs and expected values also for the probability of outcomes.
For ease of notation, let \C!f! subsequently denote \C!$\probOf/$ l2 $\circ$ $\permuteBody/$ l1!.

Our formalisation relies on a the helper lemma \C!thm_permute_uniform_lemma! establishing the equality
\begin{lstlisting}[style=haskell]
 $\HfinSumTT{0}{n-1}$ f = ~\probOf/~ x2s (~\permute/~ (deleteAt l1 (indexOf l1 x2))) ~\labelExp{expr:perm:1}~
\end{lstlisting}
As this lemma peforms the heavy lifting of our formalisation, we want to build some intuition for this statement.
First, \C!deleteAt l i! removes the element at index \C!i! from \C!l! and \C!indexOf l x! returns the index of the element \C!x! in list \C!l!.
The key insight is that given an index \C!i!, the value of the expression \C!$\probOf/$ l2 ($\permuteBody/$ l1 i)! can only be non-zero if \C!ref l1 i! equals \C!x2!, i.e., the head of the list \C!l2!, since the list head of every outcome in \C!$\permuteBody/$ l1 i! would be different otherwise, as captured by the refinement type annotations in line~\ref{fig:rand-perm:permuteBody-annot} of~\Cref{fig:rand-perm}.
In the following, let \C!j! denote \C!indexOf l1 x2!.
The proof splits the summation at the left-hand side of equation~\ref{expr:perm:1} at index \C!j! by utilising the sum theorem \thmSumSplit/, yielding the equivalent expression
\C!$\HfinSumTT{0}{j-1}\,$f + $\probOf/$ l2 ($\permuteBody/$ l1 j) + $\HfinSumTT{j+1}{n-1}\,$f!.
By the above reasoning \thmSumConstant/ is used to show that both \C!$\HfinSumTT{0}{j-1}\,$f! and \C!$\HfinSumTT{j+1}{n-1}\,$f! are zero.

With this result in place, we can finally finish the mechanised proof of \C!thm_permute_uniform!.
Remember, that \C!$\probOf/$ l2 ($\uniform/$ 0 (n-1) ($\permuteBody/$ l1))! was rewritten into the equivalent form \C!(1 / n) * $\HfinSumTT{0}{n-1}$ f!.
Now, a simple call to \C!thm_permute_uniform_lemma! is enough for \LH/ to deduce the equivalence
\begin{lstlisting}[style=haskell]
(1 / n) * $\HfinSumTT{0}{n-1}$ f = (1 / n) * ~\probOf/~ x2s (~\permute/~ (deleteAt l1 j))
\end{lstlisting}
Finally, by applying the induction hypothesis, i.e., calling the theorem \C!thm_permute_uniform! recursively with the smaller list \C!deleteAt l1 j! of length \C!n-1! the proof can be automatically finished, as the \textsf{PLE} algorithm can easily verify the equality between \C!(1 / n) * (1 / factorial (n-1))! and \C!1 / (factorial n)!.

\paragraph{Related Work}

A textbook proof for an imperative permutation generation program is given in \citet{Cormen:2009}.
The proof relies on a complicated loop invariant with an intricate argument why the invariant is preserved by the program loop depending on reasoning about conditional probabilities.
In \citet{Fisher_Yates-AFP} the correctness of the Fisher-Yates shuffling algorithm is proved.
This algorithm is conceptually equal to \permute/, as the algorithm traverses the input list and at each step swaps the current element with a random element from the remaining list.
The main theorem establishing the correctness of the Fisher-Yates shuffling algorithm in \cite{Fisher_Yates-AFP} needs 61 lines in Isabelle/HOL, slightly longer than our formalisation.

\citet{HaselwarterLMG024} use \tachis{} to verify an upper bound on the so-called \emph{expected entropy cost} of the Fisher-Yates shuffling algorithm, in which a uniform sample from a set of $N$ elements incurs a cost of $\log_2(N)$.

\medskip
\verificationEffort{
While the verification of \permute/ requires some \LH/ theorems, the proof structure can be extracted in a syntax-directed manner.
For example, the probability of \C!x! in \C!uniform lo hi f! is given by the weighted average of the probability of \C!x! in the distributions \C!f lo!, $\ldots$, \C!f hi!.
That is, the proof proceeds from first principles and requires no arguments depending on conditional probabilities in contrast to \cite{Cormen:2009}.
}    
\subsection{Hiring Problem}
\label{sec:hiring}

This case study formalises the \emph{hiring problem} as presented in \citet{Cormen:2009}.
Assume that a new person has to be hired for an open position and for each of $n$ consecutive days a candidate is sent by an employment agency for an interview.
There is a total order on the level of qualification of candidates, i.e., it can always be determined if one is more qualified than another.
An applicant is hired if the candidate is more qualified with respect to the presently employed person.
As the hiring and training of new employees is expensive, one wishes to estimate the number of times a new person is hired if $n$ candidates are considered.

Evidently, in the worst case all $n$ candidates are hired if they are interviewed in increasing order by qualification.
However, we are interested in the average case, i.e., the order of qualification is distributed uniformly at random.
To this end, we can make use of the already verified \permute/ function.
The hiring problem is formalised using the following code depicted in~\Cref{fig:rand-hiring}.

\begin{figure}[t]
\begin{lstlisting}[style=haskell,numbers=left]
{-@ ~\DEF{countHires}~ :: ~\List/~ #N -> #N @-}
countHires ~\Nil/~ = 0
countHires (~\Cons/~ x xs) = (if x > listMax xs then 1 else 0) + countHires xs

{-@ listMax :: ~\List/~ #N -> #N @-}
listMax ~\Nil/~ = 0
listMax (~\Cons/~ x xs) = max x (listMax xs)

{-@ ~\DEF{thm_expectVal_hireAssistant}~ :: {l:~\List/~ {v:#Z | v > 1} | noDuplicates l} ->
    { $\HEvalTT{\countHires/}{\permute/\ l}$ = harmonic (length l) } / [length l] @-}
\end{lstlisting}
\caption{Hiring Problem Formalisation}
\label{fig:rand-hiring}
\end{figure}

\paragraph{Implementation}
The function \countHires/ takes as input a list of natural numbers and computes the number of times that a new maximum value is encountered while scanning the list, representing the hiring of a better qualified candidate.
By using \countHires/ as an expectation, the theorem \thmExpectValHireAssistant/ formalises the statement of the hiring problem.
The given list is constrained to not contain duplicated values in order to model the total order on the qualification of the candidates.
In the following, we will detail the workings of the formalised proof of \thmExpectValHireAssistant/ as we used a different approach than the pen-and-paper proof presented in \citet{Cormen:2009}.

\paragraph{Verification}
As the termination metric of the theorem implies, we perform an inductive proof on the length of the input list \C!l!, where the base case is fully automatically verified by \LH/.
Now, consider a non-empty list \C!l! satisfying the needed refinement type as stipulated by \thmExpectValHireAssistant/ and let \C!n! denote \C!length l!.
As in previous case studies, the expected value of an expression involing \uniform/ can be rewritten into a summation over the expected values of the resulting distributions that depend on the uniformly chosen value.
In our case the summation is given by
\begin{lstlisting}[style=haskell]
(1 / n) * $\HfinSumTT{0}{n-1}$ ($\HEvalTTOnly{\countHires/}$ $\circ$ ~\permuteBody/~ l) ~\labelExp{expr:hiring:1}~
\end{lstlisting}

The next step of the proof consists in unfolding the function \countHires/ for each summation term.
More concretely, for a non-empty list \countHires/ checks if the list head is greater than the maximum of the tail of the list and increments the amount of hired candidates if this is the case.
Morever, as stipulated by the refinement for \C!$\permuteBody/$ l i!, the list head for every list of the resulting distribution is equal to \C!ref l i! while the tail of the list is given by \C!$\permute/$ (deleteAt l i)!.
With the help of \thmSumLinear/ and by introducing the auxiliary expectation \C!f!, where \C!f i l' = if ref l i > listMax l' then 1 else 0!, expression~\ref{expr:hiring:1} is rewritten into
\begin{lstlisting}[style=haskell,xleftmargin=0pt]
(1/n) * $\Bigl(\HfinSumTT{0}{n-1}$($\HEvalTTOnly{\countHires/}$ $\circ$ ~\permute/~ $\circ$ deleteAt l) + $\HfinSumTT{0}{n-1}$($\lambda$i $\to$ $\HEvalTTOnly{f\,i}[$~\permute/~ (deleteAt l i)$]$)$\Bigr)$ ~\labelExp{expr:hiring:2}~
\end{lstlisting}

The proof of \thmExpectValHireAssistant/ is completed by invoking the induction hypothesis for each summand of the left summation in expression~\ref{expr:hiring:2}, i.e., the summation can be replaced by \C!n * harmonic (n-1)!.
Additionally, as the maximum of lists is preserved under permutations, the right summation of expression~\ref{expr:hiring:2} can be replaced by
\begin{lstlisting}[style=haskell]
$\HfinSumTT{0}{n-1}$(\i -> if ref l i > listMax (deleteAt l i) then 1 else 0)
\end{lstlisting}
The value of this summation is equal to \C!1! as only the index of the maximum value of the list \C!l! causes the expression to become equal to \C!1!.
Hence, \LH/ then is able to show the equality between \C!(1/n) * (n * harmonic (n-1) + (1/n))! and \C!harmonic n!.


\paragraph{Related Work}

\citet{Cormen:2009} solve the hiring problem through a high-level proof using indicator variables.
While the textbook includes pseudocode that models the hiring problem, the proof does not refer to the code and uses assumptions such as that candidate $i$ is better qualified than candidates 1 through $1-i$ with a probability of $1/i$.
In contrast, our formal proof is based on the already verified \permute/ functionality and leverages the support of the probability monad for reasoning with expectations in combination with the necessary summation manipulation theorems.

\medskip
\verificationEffort{
Similar to the verication of uniform random permutations in~\Cref{sec:random-perm}, the overall proof structure can be deduced by the implementation of the respective functions.
Some care is needed to exploit the fact that the head element of each list contained in the distribution computed by \C!$\permuteBody/$ l i! is equal to \C!ref l i! in combination with unfolding \countHires/ to obtain the two summations highlighted in expression~\ref{expr:hiring:2}.
As can be seen in~\Cref{table:loc}, the size of the formal verification is considerably larger than the one carried out in~\Cref{sec:random-perm}.
} 

\section{Correctness of Static Probabilistic Cost Analysis}
\label{sec:soundness}

In this section we state and prove the correctness of cost analysis for probabilistic programs in \LH/.
We follow the set-up of the correctness proof for cost analysis as stated in~\cite{HandleyVH20}.

\subsection{Metatheory of \LH/}
\label{subsec:metatheory_-LH}

\begin{figure}
\def\arraystretch{1.2}
\begin{tabular}{rccll}
  Constants & $c$ & $\defsymb$ & $n \in \mathbb{Z} \mid s \in \mathbb{R} \mid \mathit{true},\mathit{false} \in \mathbb{B}$  \\
  & & | & $+,\ -,\ \ldots \mid \ =,\ <,\ \ldots \mid \land,\ \lor,\ \ldots$  \\
  & & | & \greenBox{$\mathit{dunit},\mathit{bind},\mathit{tick}, \mathit{fMap}, \mathit{coin}, \mathit{uniform}$} \\
  Values & $v$ & $\defsymb$ & $c \mid \lambda x.e \mid D \ \overline{e}$  \\
  Expressions & $e$ & $\defsymb$ & $v \mid x \mid e \ e \mid \mathit{let} \ x = e \ \mathit{in} \ e \mid \mathit{case} \ e \ \mathit{of} \{D \ \overline{y} \rightarrow e\}$  \\
  Refinements & $r$ & $\defsymb$ & $e$  \\
  Basic types & $B$ & $\defsymb$ & $\mathbb{Z} \mid \mathbb{R} \mid \mathbb{B} \mid \mathit{T}$  \\
  Types & $\tau$ & $\defsymb$ & $\{x:B \mid r\} \ \mid x:\tau_x \rightarrow \tau \mid$ \greenBox{$\mathit{SubDist} \ \tau$}\\
  Evaluation Contexts & $\evalContextSymb$ & $\defsymb$ & $\bullet \mid \evalContextSymb \ e \mid c \ \evalContextSymb \mid \mathit{case} \ \evalContextSymb \ \mathit{of} \{D \ \overline{y} \rightarrow e\}$  \\
\end{tabular}
\caption{Syntax of $\lcName$ (core \LH/ plus our probabilistic extensions highlighted in \greenBox{green}) }
\label{fig:meta}
\end{figure}

\begin{figure}
\[
\def\arraystretch{1.1}
\begin{array}{rcl}
\evalContext{e} & \smallStep & \evalContext{e'} \ \ \ \ \text{if} \ e \smallStep e' \\
c \ v & \smallStep & \delta(c,v) \\
(\lambda x.e) \ e_x& \smallStep & e[e_x/x]
\end{array} \quad \begin{array}{rcl}
\mathit{let} \ x = e_x \ \mathit{in} \ e & \smallStep & e[(\mathit{fix} \ \lambda x.e_x)/x] \\
\mathit{fix} \ e & \smallStep & e (\mathit{fix} \ e) \\
\mathit{case} \ D_j \ \overline{e} \ \mathit{of} \{D_i \ \overline{y_i} \rightarrow e_i\} & \smallStep & e_i[\overline{e}/ \overline{y_i}]
\end{array}
\]
\caption{Small-step Operational Semantics of $\lcName$ (without our probabilistic extensions)}
\label{fig:small-step-semantics}
\end{figure}
In this subsection we review the metatheory of \LH/, compare~\cite{HandleyVH20}.
In \Cref{fig:meta} we state the syntax and semantics of $\lcName$, a language that models the core of \LH/, where we have marked the parts that concern our probabilistic extensions, which we are going to discuss in the next subsection.
The language $\lcName$ includes constants, abstractions, applications, recursive definitions ($\mathit{let} \ x = e \ \mathit{in} \ e$), case statements, and datatypes.
We require all recursive definitions to be terminating; our logic is agnostic to the mechanism used to enforce structural
termination, so we leave this mechanism abstract.
In our implementation we rely on the termination checker of \LH/, which either is able to automatically prove termination based on structural metrics or which can verify the correctness of a user-provided termination metric.
The \emph{operational semantics} of $\lcName$ is defined in \Cref{fig:small-step-semantics} as a contextual, small-step, call-by-name relation $\smallStep$ whose reflective, transitive closure is denoted by $\smallStep^*$.

\paragraph{Constants.}
Constants applied to values are reduced in one step using the primitive constant
operation $c \ v \smallStep \delta(c,v)$.
For example, consider $(+)$, the primitive addition operator on integers.
In this instance, $\delta(+,n) = +_n$, where $+_n$ is the function that takes some integer $m$ and returns $n+m$.

\paragraph{Types.}
The \emph{basic types} in $\lcName$ are integers ($\mathbb{Z}$), reals ($\mathbb{R}$), booleans ($\mathbb{B}$) and type constructors (represented by the symbol $\mathit{T}$).
\emph{Types} are either \emph{refinement types} of the form $\{x:B \mid r\}$ where the basic type $B$, captured by the variable $x$, is refined by the boolean expression $r$; or \emph{dependent function types} of the form $x:\tau_x \rightarrow \tau$, where the input $x$ has the type $\tau_x$ and the result type $\tau$ may refer to the binder $x$.

\paragraph{Denotations.}
Each type $\tau$ denotes a set of expressions $\denotOf{\tau}$, defined by the dynamic semantics in~\cite{conf/icfp/VazouSJVJ14}.
Let $\remRef{\tau}$ the type obtained by erasing all refinements from $\tau$ and $e:\remRef{\tau}$ be the standard typing relation for the $\lambda$-calculus.
Then, we define the denotation of types as follows:
\[
\begin{array}{lcl}
  \denotOf{\{x:B \mid r\}} & \doteq & \{e \mid e:B, \text{ if } e \smallStep^* v, \text{ then } \mid r[v/x] \smallStep^*  \mathit{true} \} \\
  \denotOf{x:\tau_x \rightarrow \tau} & \doteq & \{e \mid e:\remRef{\tau_x \rightarrow \tau}, \text{ for all } e_x \in \denotOf{\tau_x} \text{ it holds that } e \ e_x \in \denotOf{\tau[e_x/x]} \}
\end{array}
\]

\paragraph{Syntactic Typing}
The typing judgement $\Gamma \vdash e:\tau$ decides syntactically if e is a member of $\tau$'s denotation using the environment $\Gamma$ that maps variables to their types:
$\Gamma \doteq x_1:\tau_1, \ldots, x_n:\tau_n$.

\paragraph{Typing of Constants}
To type a $\lcName$ constant $c$, we use the meta-function $\mathit{Ty}(c)$ that returns the type of $c$, which is used in the typing axiom \( \overline{\Gamma \vdash c:\mathit{Ty}(c)} \).
To ensure soundness, $\mathit{Ty}(c)$ should satisfy denotational inclusion: $c \in \mathit{Ty}(c)$.


\begin{theorem}[Soundness of Core \LH/~\cite{conf/icfp/VazouSJVJ14}]
If for all constants $c$, $c \in \mathit{Ty}(c)$, then $\emptyset \vdash e : \tau$ implies $e \in \denotOf{\tau}$.
\end{theorem}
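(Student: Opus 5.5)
The plan is the standard semantic-soundness argument for refinement types: show that every syntactic derivation $\Gamma \vdash e : \tau$ is \emph{semantically valid}, and read off the theorem in the empty environment. Concretely, for a typing environment $\Gamma = x_1:\tau_1,\ldots,x_n:\tau_n$ I would define the set of closing substitutions $\theta = [e_1/x_1,\ldots,e_n/x_n]$ satisfying $e_i \in \denotOf{\theta(\tau_i)}$, processed left to right so that later binders may mention earlier ones. The semantic judgement $\Gamma \models e : \tau$ then means that $\theta(e) \in \denotOf{\theta(\tau)}$ for every such $\theta$. The main lemma is that $\Gamma \vdash e:\tau$ implies $\Gamma \models e:\tau$, proved by induction on the typing derivation. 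Instantiating it with $\Gamma=\emptyset$ and the empty substitution gives $e \in \denotOf{\tau}$.

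Before the induction I need three auxiliary facts about denotations. The first is that denotations are closed under reduction in both directions: if $e \smallStep^* e'$, then $e \in \denotOf{\tau}$ if and only if $e' \in \denotOf{\tau}$. For refinement types this holds because $\smallStep$ is deterministic, so $e$ and $e'$ reach the same value. For function types it follows by induction on $\tau$, using that $e\,e_x \smallStep^* e'\,e_x$ under the evaluation context $\evalContextSymb\,e$. The second is a substitution lemma: $\denotOf{\tau[e_x/x]}$ depends only on the behaviour of $e_x$, so in particular it is invariant when $e_x$ is replaced by anything it reduces to. The third is that subtyping is sound: $\Gamma \vdash \tau_1 \preceq \tau_2$ implies $\denotOf{\theta\tau_1} \subseteq \denotOf{\theta\tau_2}$. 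For basic refinements this reduces to validity of the implication $r_1 \Rightarrow r_2$ under $\theta$, which is exactly what the SMT check guarantees. For function types it follows by contravariance on the argument and covariance on the result.

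The induction then proceeds rule by rule.
\begin{itemize}
\item \emph{Constants}: the hypothesis $c \in \mathit{Ty}(c)$ gives the result directly.
\item \emph{Variables}: the result holds by the definition of $\theta$.
\item \emph{Abstraction}: extend $\theta$ with an arbitrary $e_x \in \denotOf{\theta\tau_x}$, apply the induction hypothesis, and use backward closure for the $\beta$-step.
\item \emph{Application}: unfold the function denotation and use the substitution lemma to identify $\tau[e_x/x]$.
\item \emph{Case}: reduce the scrutinee to a constructor value; the constructor's refinement then justifies the branch environment, and reduction closure handles the case step.
\item \emph{Subsumption}: use soundness of subtyping.
\end{itemize}

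I expect the main obstacle to be the recursive $\mathit{let}$ rule, which is also where the termination requirement on recursive definitions is actually used. The rule unfolds $x$ to $\mathit{fix}\,\lambda x.e_x$, and a naive argument would assume the very membership it is trying to prove. I would first try induction on the well-founded termination metric. Assume every call at a smaller metric value lies in the denotation; the induction hypothesis for $e_x$ then shows that one further unfolding does as well. Because every recursive call strictly decreases the metric, this closes the argument, with no appeal to domain-theoretic admissibility. If that proves awkward, the fallback is to cite the metatheory of~\cite{conf/icfp/VazouSJVJ14}, where the same argument is carried out with step-indexing and termination collapses the indices.
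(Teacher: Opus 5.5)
\textbf{There is a genuine gap.} The paper gives no proof here; it imports the result from \cite{conf/icfp/VazouSJVJ14}. That proof is organised around exactly the difficulty your plan skips.

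\textbf{The subtyping step fails.} The denotation of $\{x:B\mid r\}$ is a partial-correctness one (``if $e\smallStep^* v$ then \ldots''). So under this call-by-name semantics, every divergent expression of erased type $B$ inhabits every refinement of $B$, including $\{v{:}\mathbb{Z}\mid\mathit{false}\}$. Your closing substitutions, and the argument ranges in function denotations, therefore contain divergent terms. For those terms your third auxiliary fact is false: SMT validity of $\bigwedge_i r_i \wedge r_1 \Rightarrow r_2$ says nothing about a $\theta$ in which some $\theta(x_i)$ never becomes a value satisfying $r_i$.

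Concretely, $\emptyset\vdash \lambda x.\,0 : x{:}\{v{:}\mathbb{Z}\mid\mathit{false}\}\to\{v{:}\mathbb{Z}\mid\mathit{false}\}$ is derivable, because under the binder $x$ the subtyping check $\mathit{false}\wedge v=0\Rightarrow\mathit{false}$ is valid. Now take $\Omega = \mathit{let}\ y=y\ \mathit{in}\ y$. Then $\Omega\in\denotOf{\{v{:}\mathbb{Z}\mid\mathit{false}\}}$ vacuously, while $(\lambda x.\,0)\,\Omega\smallStep 0$ and $0\notin\denotOf{\{v{:}\mathbb{Z}\mid\mathit{false}\}}$. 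It does not help that $\Omega$ would be rejected by the termination checker, since denotations range over all expressions of the erased type. So your abstraction case breaks, however you handle recursion.

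\textbf{Termination is used in the wrong place.} In your recursive-$\mathit{let}$ case the metric is an expression in the arguments. For a divergent argument drawn from a domain denotation it does not evaluate to a natural number, so the well-founded induction cannot even be set up.

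The missing idea, which the cited proof supplies, is that under lazy evaluation a binder's refinement may be assumed in verification conditions only if the binder is guaranteed to reduce to a value. Vazou et al.\ enforce this as follows. They stratify types into possibly-divergent and terminating ones. The terminating ones get a stronger denotation that additionally requires $e\smallStep^* v$. Only terminating binders are embedded into the logical context. Termination checking is then what shows that recursive definitions inhabit these stronger denotations. That is where an argument like your metric induction belongs, now with convergent arguments.

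Only after this change is your SMT-transfer step sound. Even then it relies on your substitution lemma, which needs a confluence/standardisation or simulation argument. The substituted copies of $e_x$ sit outside evaluation positions, so determinism of $\smallStep$ alone does not relate $r[e_x/x]$ and $r[e_x'/x]$.
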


\subsection{Probabilistic Cost Analysis}

We now state the full definition of $\lcName$, i.e., we give definitions to our probabilistic extensions.

\paragraph{Extensions of Types and Constants.}
We define the type $\mathit{SubDist}$ as well as the constants
$\mathit{dunit}$, $\mathit{bind}$, $\mathit{tick}$, $\mathit{fMap}$, $\mathit{coin}$, and $\mathit{uniform}$ as in \Cref{sec:prob-monad}.

Moreover, the functions $\mathit{expectCost}$, $\mathit{expectVal}$, $\mathit{probOf}$, and $\mathit{probMass}$ are provided for use in refinement type annotations.
Note that this means that we define our extension in terms of core \LH/.
We further point out that while we have introduced $\mathit{SubDist}$ in terms of sequences of outcomes, we treat them as abstract types in $\lcName$, e.g., we do not allow applying functions on sequences of outcomes of these types.
Rather, the only way to manipulate these types is via the operations $\mathit{dunit},\mathit{bind},\mathit{tick}, \mathit{fMap}, \mathit{coin}$, and $\mathit{uniform}$.

\paragraph{Semantics and Soundness of Constants.}
We stress that
as we use core \LH/ terms to define the constants $\mathit{dunit},\mathit{bind},\mathit{tick}, \mathit{fMap}, \mathit{coin}$, and $\mathit{uniform}$,
their semantics is induced by the semantics for the core \LH/ terms.
Further, we can rely on the (refinement) type checker of \LH/ to check the (refinement) types of our definitions of these constants,
i.e., we have $c \in \mathit{Ty}(c)$ for each of the constants $\mathit{dunit},\mathit{bind},\mathit{tick}, \mathit{fMap}, \mathit{coin}$, and $\mathit{uniform}$.
Therefore, these constants can be \emph{safely} used in $\lcName$ while preserving soundness.

We now obtain the soundness of our probabilistic cost monad as a corollary to the soundness of core \LH/:

\begin{theorem}[Soundness of Probabilistic Cost Analysis]
\label{thm:soundness}
Let $p: \mathbb{R} \rightarrow \mathbb{B}$ and $q: \mathbb{R} \rightarrow \mathbb{B}$ be predicates, let $f:\tau\rightarrow \mathbb{R}$ be an expectation, and let $e: \{d : \mathit{SubDist} \ \tau \mid p \ (\mathit{expectCost} \ d) \land q \ (\mathit{expectVal} \ f \ d) \}$ be an expression with $e\smallStep^* v$.

Then, $p \ (\mathit{expectCost} \ v) \smallStep^* \mathit{true}$ and $q \ (\mathit{expectVal} \ f \ v) \smallStep^* \mathit{true}$, as well as $w \in \denotOf{\tau}$ for all $w$ contained in $\mathit{support}\ v$.
\end{theorem}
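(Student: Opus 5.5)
The plan is to derive the statement as a direct corollary of the Soundness of Core \LH/ theorem, applied to $\lcName$ with the probabilistic constants. First, the hypothesis of that theorem must be met: for every constant $c$, $c \in \denotOf{\mathit{Ty}(c)}$. For the core constants this is part of the original metatheory. For $\mathit{dunit},\mathit{bind},\mathit{tick},\mathit{fMap},\mathit{coin},\mathit{uniform}$ I will use the observation made just before the statement. These constants are defined as core \LH/ terms, and their definitions are terminating and checked by \LH/ against the refinement types in \Cref{fig:probMonad-defs} and the \coin/ and \uniform/ signatures. So each defining term $t_c$ satisfies $\emptyset \vdash t_c : \mathit{Ty}(c)$. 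Core soundness for the sublanguage without these constants then gives $t_c \in \denotOf{\mathit{Ty}(c)}$. Because the semantics of $c$ is by construction that of $t_c$, denotational inclusion $c \in \mathit{Ty}(c)$ follows.

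Second, I read the hypothesis on $e$ as $\emptyset \vdash e : \{d:\mathit{SubDist}\ \tau \mid p\ (\mathit{expectCost}\ d) \land q\ (\mathit{expectVal}\ f\ d)\}$. Core soundness then yields membership of $e$ in the denotation of this type. $\mathit{SubDist}\ \tau$ is a refined type constructor, so I unfold it as a basic type $T$ refined by $\mathit{probMass}\ d \le 1$ together with the element-refinement $\tau$. By the first clause of the denotation definition, since $e \smallStep^* v$, the refinement instantiated at $v$ reduces to $\mathit{true}$. The refinement is a conjunction, so its evaluation to $\mathit{true}$ forces both conjuncts $p\ (\mathit{expectCost}\ v)$ and $q\ (\mathit{expectVal}\ f\ v)$ to reduce to $\mathit{true}$. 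To get this, I use that $\land$ is a primitive constant whose $\delta$ returns $\mathit{true}$ only when both arguments are $\mathit{true}$, together with determinism of $\smallStep$ and termination of the reflected functions $\mathit{expectCost}$ and $\mathit{expectVal}$.

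Third, for the support claim I will use the fact that a distribution over the refined type $\tau$ carries the refinement of $\tau$ on each outcome value, as in the \C!Outcome! datatype. The approach is an induction on the finite sequence of outcomes of $v$. For each $w$ in $\mathit{support}\ v$, the datatype-constructor typing gives the field $w$ type $\tau$, and by the denotation of data constructor fields $w \in \denotOf{\tau}$.

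I expect this last step to be the main obstacle, because $\mathit{SubDist}\ \tau$ is treated abstractly in $\lcName$ and the core denotations are only given for refined base types and function types. I would first try to extend the denotation with a clause $\denotOf{\mathit{SubDist}\ \tau} \doteq \{e \mid e \smallStep^* v,\ \mathit{probMass}\ v \le 1,\ \forall w \in \mathit{support}\ v.\ w \in \denotOf{\tau}\}$. I would then check that core soundness still goes through, which is plausible since the only introduction forms are the constants already shown sound. Failing that, I would encode $\mathit{SubDist}\ \tau$ as the refined list-of-outcomes datatype and appeal to the standard treatment of refined datatype fields in~\cite{conf/icfp/VazouSJVJ14}.
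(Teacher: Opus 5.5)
This is essentially the paper's argument: the paper also derives the theorem as a bare corollary of core \LH/ soundness, with $c \in \mathit{Ty}(c)$ for $\mathit{dunit},\mathit{bind},\mathit{tick},\mathit{fMap},\mathit{coin},\mathit{uniform}$ justified by their being \LH/-checked core definitions, and your unfolding of the refinement denotation and splitting of the conjunction only spell out what the paper leaves implicit. The paper gives no separate argument for the support clause, so the missing denotation of $\mathit{SubDist}\ \tau$ that you flag is a gap in the paper's presentation rather than a deviation on your part; if you take your first fix, note that you must then re-check each constant against the new denotation (each one preserving the support invariant), because your first step only established soundness for the concrete outcome-list types.
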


\section{Related Work}
\label{sec:related-work}

Essentially starting with the seminal work of Kozen~\cite{Kozen81,Kozen:JCSC:85}, there is a large body of work on the analysis of \emph{probabilistic} programs (see also~\cite{BKS:2020} for further pointers).
Similarly, the literature on the verification of \emph{functional} programs
is extensive (see for example~\cite{EberlHN20} and the references therein). Furthermore, there is
a significant number of works on the expected (amortised) cost analysis of probabilistic functional programs,
for example~\cite{NipkowB19,HandleyVH20,WangKH20,LMZ:2021,LMZ:2022}. For brevity, we thus restrict to
closely related work.

\paragraph{Probabilistic reasoning}

Typically, probabilistic reasoning is formalised in functional languages
like Haskell through \emph{probability monads}~\cite{Ramsey02, Erwig06, Scibior15, TassarottiH19}.
Similarly, relational properties---in particular coupling---can be encoded via a monadic
implementation, cf.~\citet{VasilenkoVB22}. \citet{VasilenkoVB22} achive this in a similar
fashion as our contribution by enhancing \LH/ through a dedicated probability monad.
In contrast to our cost-aware probability monad no seamless reasoning about costs is supported.
Moreover, the \LH/ code of \cite{VasilenkoVB22} relies on several axioms to establish relational axioms as well as a notion of distance between distributions.
The only axioms used in our work are three basic properties of logarithms, which is a necessary choice in order to support reasoning about the real mathematical logarithm.
Additionally, the case studies performed to evaluate the methodology in~\cite{VasilenkoVB22} focus on
verifying classical machine learning properties, like \emph{convergence} and \emph{stability}. Our motivation
stems from the analysis of non-functional program properties, to wit, the \emph{expected costs} of
randomised algorithms.

\paragraph{Expected cost analysis}

The (automated) cost analysis of probabilistic programs has been intensively studied
(see, for example~\cite{KKMO:ACM:18,NgoCH18,NipkowB19,AMS20,EberlHN20,HandleyVH20,KKM20,WangKH20,ABD:2021,MoosbruggerBKK21,LMZ:2022,AvanziniMS23,AvanziniBGMV24,HaselwarterLMG024,ChatterjeeGMZ24,ZaiserMO25}).
Martingale based techniques have been implemented, e.g.,
by Peixin~Wang et al.~\cite{WFGCQS:PLDI:19}. Related results have been reported by
\citet{MoosbruggerBKK21}.
In the following, we discuss in more details the literature on cost analysis of probabilistic
functional programs.

\citet{WangKH20} establish automated support for the analysis of probabilistic higher-order (functional) programs, implemented in the automated resource analysis tool~\raml~\cite{HoffmannDW17}. The analysis not only provides an expected cost analysis, but also an expected \emph{amortised} analysis. 
In a similar vain~\citet{LMZ:2022} establish the prototype implementation~\atlas, that automatically infers precise expected (amortised) cost bounds for various functional data structures.
As the focus of these tools is on automated inference of (precise) cost bounds, the programs in scope are necessarily restricted. Nevertheless, for two of our case studies---\emph{meldable heaps}~\cite{GambinM98} and \emph{randomised splay trees}~\cite{ALBERS2002213}---%
\citet{LMZ:2022} report the automated inference of precise (and optimal) expected cost bounds.
On the other hand, clearly, our extension of a refinement type system is much more
generally applicable for verification. In particular, the approach of \citet{LMZ:2022}
cannot handle---not even express---our remaining case studies.

Recently \citet{HaselwarterLMG024} have established \tachis, a higher-order seperation logic,
formulated for a probabilistic higher-order functional programs. \tachis\ constitutes a powerful program logic, cleverly using \emph{credits} to represent expected costs and other cost metrics like the consumed entropy. Further, similar to the work in~\cite{WangKH20,LMZ:2022}, the analysis can establish expected \emph{amortised} cost bounds. A full~\coq\ implementation of~\tachis\ is available.%
\footnote{See~\url{https://zenodo.org/records/12659527}.}
Methodologically, \tachis\ is orthogonal to the here developed probability monad.
\tachis\ aims to be a rich reasoning tool for higher-order functional programs,
while our aim is to ease the formalisation burden as much as possible,
fascilitating automated support whenever possible. Like in our work~\citet{HaselwarterLMG024} include meldable heaps and randomised quicksort in their case studies.
As depicted in~\Cref{table:loc-comp} our approach leads to significant lower formalisation
efforts. The formalisation of meldable heaps requires even a magnitude less code than
the corresponding encoding in~\tachis. With respect to randomised quicksort the formalisation
in our cost-aware probability monad requires half the code than the corresponding formalisation in~\tachis.

\paragraph{Randomised algorithms and data structures}

Apart from meldable heaps, randomised splay trees and randomised quicksort, we have
also considered \emph{randomised quickselect}~\cite{Hoare61a,devroye1984exponential} (see the Appendix),
\emph{random permutations}~\cite{durstenfeld1964algorithm} and the
\emph{hiring problem}~\cite{ajtai2001improved} (see the Appendix) as case studies.

As already mentioned, randomised quicksort and meldable heaps
have been considered by~\citet{HaselwarterLMG024}. Further formalisations were
done in \coq, cf.~\citet{WeegenMcKinna} and in \isabelle, cf.~\citet{EberlHN20}.
Randomised quickselect was studied in~\cite{AvanziniBGMV24}.
Meldable heaps, as well as randomised splay trees have been studied by~\citet{LMZ:2022},
even in a fully automated way. To the best of our knowledge neither random permutations
nor the hiring problem as been formalised before.

The comparison to related work for all our case studies, with the exception of
the hiring problem and randomised quickselect, are given in~\Cref{sec:case-studies}. 
In \cite{Cormen:2009} the hiring problem is solved through a high-level proof using indicator variables while our formal proof is based on the verified \permute/ functionality and leverages the support of the probability monad for reasoning with expectations in combination with summation theorems.

In~\cite{AvanziniBGMV24} the formalisation of randomised quickselect is established
in~\easycrypt{} exploiting a relational Hoare logic, dedicated to the expressibility
of expected cost analyses of probabilistic data structure. The formalisation effort
is on par to our encoding, cf.~\Cref{table:loc-comp}, although not directly comparable.
Randomised quickselect is implemented in~\cite{AvanziniBGMV24} in a variant of Dijkstra's command language
and the formalisation combined a clever abstraction of the code in lieu with a relational
program logic. Our base implementation is functional and functional correctness
as well a cost analysis is achieved via monadic reasoning.

\section{Conclusion}
\label{Conclusion}

We presented a probability monad for \LH/ that enables the automated verification of expected values and expected costs in probabilistic programs.
Our approach supports discrete distributions with finite support and leverages refinement types to encode probabilistic specifications that are discharged automatically by SMT solvers.
Through several case studies we demonstrated both the automation and expressiveness of our framework.

In future work, we aim to extend our monad to support enumerable discrete distributions with infinite support.
The current restriction to finite support arises from \LH/'s requirement that functions inside refinement types are terminating.
However, it is sound in principle to allow functions that terminate almost surely;
realizing this would require extending \LH/'s theory and termination checker to reason about probabilistic termination with probability one. 

\section*{Data Availability Statement}
\phantomsection\addcontentsline{toc}{section}{Data Availability Statement}

The complete source code of the probability monad together with the case studies is available \cite{artifact}.

\begin{acks}
We would like to thank the Haskell Symposium 2026 reviewers for their valuable comments and suggestions.

The authors wish to acknowledge the support of the FWF project AUTOSARD:
``Automated Sublinear Amortised Resource Analysis of Data Structures''
No.~P36623 and the project VASSAL: ``Verification and Analysis for
Safety and Security of Applications in Life'' funded by the European
Union under Horizon Europe WIDERA Coordination and Support
Action/Grant Agreement No.~101160022 \raisebox{-1pt}{\includegraphics[width=.03\textwidth]{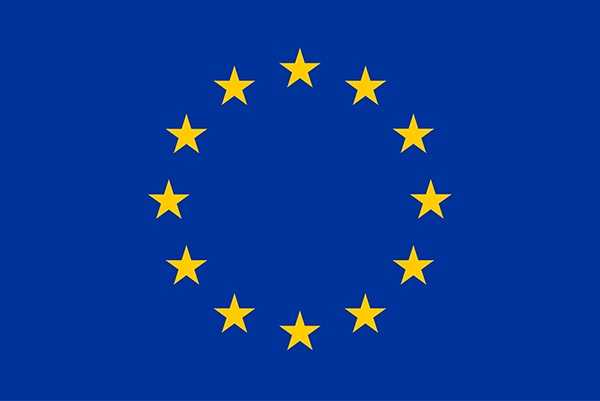}}.
\end{acks}

\appendix

\renewcommand\thefigure{\thesection.\arabic{figure}}
\setcounter{figure}{0}

\section{Appendix}
\label{sec:appendix}

\subsection{Bayes' Theorem}
\label{sec:bayes}

Bayes' theorem states that
\[
\mathbb{P}(A|B) = \frac{\mathbb{P}(B|A)\mathbb{P}(A)}{\mathbb{P}(B)}
\]
where $A, B$ are \emph{events}, i.e., subsets of outcomes of a probability distribution, with $\mathbb{P}(B) \neq 0$, and $\mathbb{P}(A|B)$ denotes the \emph{conditional probability} of event $A$ occurring provided that $B$ has happened.
Conditional probability is defined as
\[
\mathbb{P}(A|B) = \frac{\mathbb{P}(A \cap B)}{\mathbb{P}(B)}
\]

In order to verify Bayes' theorem in terms of the cost-aware probability monad we have to formalise events and conditional probabilities.
An event is modelled using the type \Event/ as a function from values to Booleans, where the function is used to filter the outcomes belonging to the event.
Then, the probability of an event with respect to a distribution is defined as
\begin{lstlisting}[style=haskell]
type ~{\color{constructor}Event}\label{lst:Event}~ #a = (#a -> #B)

{-@ ~\DEF{evProb}~ :: ~\Event/~ #a -> d:~\SubDist/~ #a -> {r:$\Rnneg$ | r <= ~\probMass/~ d} @-}
evProb f ~\DEmpty/~ = 0.0
evProb f (~\DSome/~ (~\Outcome/~ _ v' p) r) = (if f v' then p else 0) + evProb f r
\end{lstlisting}

Then, \C!condProb! defines the conditional probability using the helper function \andF/, where \C!$\andF/$ a b! defines the intersection of events \C!a! and \C!b!.
\begin{lstlisting}[style=haskell]
{-@ ~\DEF{andF}~ :: ~\Event/~ #a -> ~\Event/~ #a -> ~\Event/~ #a @-}
andF f1 f2 x = (f1 x) && (f2 x)

{-@ ~\DEF{condProb}~ :: d:~\SubDist/~ a -> event:~\Event/~ #a -> {given:~\Event/~$\,\alpha\,|\,$~\evProb/~ given d > 0} -> $\Rnneg$ @-}
condProb d event given = ~\evProb/~ (~\andF/~ event given) d / ~\evProb/~ given d
\end{lstlisting}

For the proof of Bayes' theorem we need two helper lemmas.
First, \thmEvProbExt/ establishes extensionality, i.e., \C!$\evProb/$ f d! can be replaced by \C!$\evProb/$ g d! if the events \C!f! and \C!g! agree on all inputs.
\begin{lstlisting}[style=haskell]
{-@ ~\DEF{thm_evProb_ext}~ :: d:~\SubDist/~ a -> f:~\Event/~ a -> g:~\Event/~ a -> (x:a -> { f x = g x }) ->
      { ~\evProb/~ f d = ~\evProb/~ g d } @-}
thm_evProb_ext ~\DEmpty/~ _ _ _ = ()
thm_evProb_ext (~\DSome/~ (~\Outcome/~ c v p) r) f g pr = pr v |> thm_evProb_ext r f g pr
\end{lstlisting}

Moreover, \thmJointProbZero/ shows that if the probability of one event is zero, then also the intersection with any other event results in a probability of zero.

\noindent\begin{minipage}{\linewidth}
\begin{lstlisting}[style=haskell]
{-@ ~\DEF{thm_jointProb_zero}~ :: d:~\SubDist/~ a ->  {f:~\Event/~ #a | ~\evProb/~ f d = 0} -> g:~\Event/~ #a ->
        { ~\evProb/~ (~\andF/~ f g) d = 0 } @-}
thm_jointProb_zero ~\DEmpty/~ _ _ = ()
thm_jointProb_zero (~\DSome/~ (~\Outcome/~ c v p) r) f g | f v       = ~\evProb/~ f r |> unreachable
                                                 | otherwise = thm_jointProb_zero r f g
\end{lstlisting}
\end{minipage}

Note that the first case of the inductive proof of \thmJointProbZero/ makes use of the expression \C!unreachable!, which is provided by \LH/.
Its refinement type is \C!{v:() | false}!, that is, verification can only succeed in contexts with contradictory assumptions.
In our case, if \C!f v! holds this contradicts the assumption that \C!$\evProb/$ f d! equals \C!0! as this can only happen if \C!f! evalutes to \C!False! for all values contained in \C!d!.

Now Bayes' theorem can be easily formalised and verified with ease by a single case distinction:

\noindent\begin{minipage}{\linewidth}
\begin{lstlisting}[style=haskell]
{-@ thm_bayes :: d:~\ProperDist/~ a -> evA:~\Event/~ a -> {evB:~\Event/~ a | evProb evB d > 0} ->
      { ~\condProb/~ d evA evB = (~\condProb/~ d evB evA) * (~\evProb/~ evA d) / (~\evProb/~ evB d) } @-}
thm_bayes d evA evB
    | ~\evProb/~ evA d == 0.0 = ~\thmJointProbZero/~ d evA evB
    | otherwise = ~\thmEvProbExt/~ d (~\andF/~ evA evB) (~\andF/~ evB evA) (\x -> ~\andF/~ evA evB x |> ())
\end{lstlisting}
\end{minipage}

In the first case we use \thmJointProbZero/ to establish that \C!$\condProb/$ d evA evB! is equal to \C!0!.
Otherwise, the theorem can be proven by \LH/ by using \thmEvProbExt/ to obtain the equality between \C!$\evProb/$ ($\andF/$ evA evB)! and \C!$\evProb/$ ($\andF/$ evB evA)!.

\medskip
\verificationEffort{
The formalisation of Bayes' Theorem was straightforward by defining events as predicates over outcomes and establishing the helper theorems \thmJointProbZero/ and \thmEvProbExt/ used in the respective cases in the proof.
Moreover, as declared in~\Cref{table:loc}, a few lines of code suffice for the mechanisation.
}
\subsection{Randomised Quickselect}
\label{sec:qselect}

Quickselect is a search algorithm that finds $k$-th smallest element in a given list~\cite{Hoare61a}.
It operates in a similar fashion as quicksort with the difference that recursive calls are only performed on the part of the input list that contains the searched-for element.

We show functional correctness of randomised quickselect in \LH/ using the cost-aware probability monad and establish that
the expected number of comparisons performed by the algorithm on a list with $n$ distinct elements is at most $4n$.
That is, randomised quickselect achieves a linear time complexity.

\paragraph{Implementation}
The implementation of randomised quickselect is given in~\Cref{fig:quick-select}.
Note that functional correctness is achieved through the annotation in line~\ref{lst:qselect:correct}, i.e., every value in the contained distribution has a rank of \C!k! in the given input list.
Some lemmas are needed in the implementation in order to establish the functional correctness, i.e., \C!body_case_sm!,  \C!body_case_bg!, and \C!castRank!.

The first two lemmas allow exploiting the results in recursive calls that search in a part of the partitioned list for the given list.
Moreover, \C!castRank! is needed to reason when the element removed by \C!deleteAt l i! is added back to the list, since this changes the rank of the element that is searched for.
These lemmas are used in combination with \fMap/ and \LH/ is able to check that the refinement types are correct.
The number of comparisons is again being tracked by \tick/ such that every invocation of \partition/ incurs a cost corresponding to the length of the partitioned list.

\begin{figure}[t]
\begin{lstlisting}[style=haskell,numbers=left]
{-@ ~\DEF{qselect}~ :: Ord #a => {l:~\List/~ #a | noDuplicates l} -> {k:#N | k < length l} ->
      ~\ProperDist/~ {r:#a | k = rank l r} / [length l, 1] @-} ~\label{lst:qselect:correct}~
qselect l k = ~\uniform/~ 0 (length l - 1) (~\qselectBody/~ l k)

{-@ ~\DEF{qselectBody}~ :: Ord #a => {l:~\List/~ #a | noDuplicates l} -> {k:#N | k < length l} ->
      {i:#N | i < length l} -> ~\ProperDist/~ {r:#a | k = rank l r} / [length l, 0] @-}
qselectBody l k i =
    let e = ref l i in
    let (sm, bg) = ~\partition/~ e (deleteAt l i) in
    ~\tick/~ (length l - 1) (
        if k == length sm then
          rank l e |> ~\dunit/~ e ~\label{lst:qselect:k-eq}~
        else if k < length sm then
          smaller l e |> ~\fMap/~ (~\qselect/~ sm k) (body_case_sm l k e) ~\label{lst:qselect:k-lt}~
        else -- k > length sm
             rank l e ~\label{lst:qselect:k-gt}~
           |> ~\fMap/~ (~\qselect/~ bg (k-(length sm)-1))
                ((castRank l (k-1) e) $\circ$ (body_case_bg (deleteAt l i) (k-(length sm)-1) e))
    )

{-@ body_case_sm :: Ord #a => {l:~\List/~ #a | noDuplicates l} -> k:#Z -> e:#a ->
        {r:#a | r < e && k = rank (smaller l e) r} -> {res:#a | res = r && k = rank l r} @-}

{-@ body_case_bg :: Ord #a => {l:~\List/~ #a | noDuplicates l} -> k:#Z ->
      {e:#a | not (e $\in$ l)} -> {r:#a | r > e && k = rank (bigger l e) r} ->
      {res:#a | res = r && k + length (smaller l e) = rank l r} @-}

{-@ castRank :: Ord #a => {l:~\List/~ #a | noDuplicates l} -> k:#Z -> {e:#a | e $\in$ l} ->
      {r:#a | r > e && k = rank (removeFirst e l) r} ->
      {res:#a | res = r && k + 1 = rank l r} @-}
\end{lstlisting}

\caption{Randomised Quickselect}
\label{fig:quick-select}
\end{figure}

\paragraph{Verification}
The proof establishing an upper bound on the expected number of comparisons follows the same approach as our analysis of randomised quicksort.
That is, using the rank-index bijection described in the quicksort formalisation, we first show that \C!$\HEcostTT{\qselect/\ l\ k}$! equals \C!qsel_rec (length l) k!
Where the quickselect recurrence \C!qsel_rec! is defined as
\begin{lstlisting}[style=haskell,xleftmargin=0pt]
qsel_rec n k = n - 1 + (1/n) * ($\HfinSumTT{0}{k-1}$ (\i -> qsel_rec (n-i-1) (k-i-1)) + $\HfinSumTT{k+1}{n-1}$ (\i -> qsel_rec i k))
\end{lstlisting}

Using the summation library we then show the mentioned upper bound by following the steps from \cite{qsel-rec}, resulting in following theorem: \\
\begin{minipage}{\linewidth}\begin{lstlisting}[style=haskell,xleftmargin=0pt]
{-@ thm_qselect_cost_bound :: Ord #a => {l:~\List/~ #a | noDuplicates l} -> {k:#N | k < length l} -> 
        { $\HEcostOpen$~\qselect/~ l k$]$ <= 4 * (length l) } @-}
\end{lstlisting}\end{minipage}

\medskip
\verificationEffort{
Both functional correctness and the expected-cost analysis for randomised quickselect is more involved than the one performed for randomised quicksort in~\Cref{sec:rand-quick}.
While the sorted sub-lists can be easily combined into a sorted list with the help of refinement types in the case of quicksort, the reasoning about the rank of the searched-for element in quickselect required three helper theorems.
Additionally, the extracted recurrence is more complicated and only an upper bound can be shown.
}
\subsection{Logarithmic Proofs}
\label{appendix:log}

A useful logarithmic inequality that we employ in two of the case studies is the following:
\[ 2 + \log_2(x) + \log_2(y) \leqslant 2\log_2(x+y) \quad \text{ for } x, y \geqslant 1. \]
This inequality has also been used by Okasaki \cite{Okasaki:1999}.
In the formalised proof, presented below, we have to instantiate the needed assumptions one after another such that the SMT solver is able to verify the inequality from the given facts.

First, we establish that \C!$\logTwo/$ 1 = 0! by using \C!$\logTwoProduct/$ 2 1!, which results in the fact that \C!$\logTwo/$ 2! is equal to \C!$\logTwo/$ 2 + $\logTwo/$ 1!.
The equality can be transformed into \C!1 = 1 + $\logTwo/$ 1! by using \logTwoBase/, from which \C!$\logTwo/$ 1 = 0! is evident.
\begin{lstlisting}[style=haskell]
{-@ thm_log~\textsubscript{2}~_1_eq_0 :: {~\logTwo/~ 1 = 0} @-}
thm_log~\textsubscript{2}~_1_eq_0 = ~\logTwoProduct/~ 2 1 |> ~\logTwoBase/~
\end{lstlisting}

In the following proof of \C!thm_log$\textsubscript{2}$_inequality! we state in comments for each used axiom how the information is used to gradually build the proof.
\begin{lstlisting}[style=haskell]
{-@ thm_log~\textsubscript{2}~_inequality :: x:$\Zpos$ -> y:$\Zpos$ -> {2 + ~\logTwo/~ x + ~\logTwo/~ y <= 2 * ~\logTwo/~ (x+y)} @-}
thm_log~\textsubscript{2}~_inequality x y =
       ~\logTwoProduct/~ (x+y) (x+y)            -- $\color{commentGray}\log_2((x+y)\cdot (x+y)) = 2\log_2(x+y)$
    |> ~\logTwoMonotone/~ (4*x*y) ((x+y)*(x+y)) -- $\color{commentGray}\log_2(4xy) \leqslant 2\log_2(x+y)$
    |> ~\logTwoProduct/~ (4*x) y                -- $\color{commentGray}\log_2(4x) + \log_2(y) \leqslant 2\log_2(x+y)$
    |> ~\logTwoProduct/~ 4 x                    -- $\color{commentGray}\log_2(4) + \log_2(x) + \log_2(y) \leqslant 2\log_2(x+y)$
    |> ~\logTwoProduct/~ 2 2 |> thm_log~\textsubscript{2}~_1_eq_0 -- $\color{commentGray}2 + \log_2(x) + \log_2(y) \leqslant 2\log_2(x+y)$
\end{lstlisting}
In the call to \logTwoMonotone/ the inequality $4xy \leqslant (x+y)^2$ is automatically validated by the SMT solver.
In a pen-and-paper proof (cf. \cite[Lemma 16]{Hofmann22}) one can argue that $(x+y)^2 - 4xy = (x-y)^2 \geqslant 0$.

\subsection{Summation Theorems}

A list of summation theorems provided by our library and not shown in~\Cref{sec:summations} is given in \Cref{fig:finSum-theorems}.

\begin{figure}[H]
\begin{minipage}[t]{0.48\linewidth}%
\begin{lstlisting}[style=haskell,xleftmargin=0pt,aboveskip=0pt,belowskip=0pt]
~\DEF{thm_sum_factor}~ :: lo:#Z -> hi:#Z ->
    c:#R -> f:(~\Range/~ lo hi -> #R) ->
    g:(i:~\Range/~ lo hi -> {v:#R | c * v = f i}) ->
    { $\HfinSumTT{lo}{hi}$ f = c * $\HfinSumTT{lo}{hi}$ g }

~\DEF{thm_sum_constant}~ :: lo:#Z ->
    {hi:#Z | lo <= hi} -> c:#R ->
    f:(i:~\Range/~ lo hi -> {v:#R | v = c}) ->
    { $\HfinSumTT{lo}{hi}$ f = c * (hi - lo + 1) }

~\DEF{thm_sum_rewrite}~ :: lo:#Z -> hi:#Z ->
    f:(~\Range/~ lo hi -> #R) ->
    g:(i:~\Range/~ lo hi -> {v:#R | v = f i}) ->
    { $\HfinSumTT{lo}{hi}$ f = $\HfinSumTT{lo}{hi}$ g }
\end{lstlisting}
\end{minipage}%
\hfill\vrule width 0.6pt\hfill
\begin{minipage}[t]{0.48\linewidth}%
\begin{lstlisting}[style=haskell,xleftmargin=0pt,aboveskip=0pt,belowskip=0pt]
~\DEF{thm_sum_linear}~ :: lo:#Z -> hi:#Z ->
    left:(~\Range/~ lo hi -> #R) ->
    right:(~\Range/~ lo hi -> #R) -> 
    f:(i:~\Range/~ lo hi ->
        {v:#R | left i + right i}) ->
    { $\HfinSumTT{lo}{hi}$ f = $\HfinSumTT{lo}{hi}$ left + $\HfinSumTT{lo}{hi}$ right }

~\DEF{thm_sum_reverse}~ :: lo:#Z -> hi:#Z ->
    g:(~\Range/~ lo hi -> #R) -> 
    f:(~\Range/~ lo hi -> #R) ->
    (i:~\Range/~ lo hi -> { g i = f (hi+lo-i)}) ->
    { $\HfinSumTT{lo}{hi}$ g = $\HfinSumTT{lo}{hi}$ f }

~\DEF{thm_sum_shift}~ :: lo:#Z -> hi:#Z -> k:#N ->
    g:(~\Range/~ (lo+k) (hi+k) -> #R) ->
    f:(i:~\Range/~ lo hi -> {v:#R | v = g (i+k)}) ->
    { $\HfinSumTT{lo}{hi}$ f = $\HfinSumTT{lo+k}{hi+k}$ g }
\end{lstlisting}
\end{minipage}

\caption{Finite Sum Theorems}
\label{fig:finSum-theorems}
\end{figure}

\bibliographystyle{ACM-Reference-Format}
\bibliography{bib}

\end{document}